\theoremstyle{remark}
\newtheorem{definition}{\indent Definition}
\newtheorem{lemma}{\indent Lemma}
\newtheorem{theorem}{\indent \emph{\textbf{Theorem}}}
\begin{document}

\makeatletter
\newcommand{\ud}{\mathrm{d}}
\newcommand{\rmnum}[1]{\romannumeral #1}
\newcommand{\polylog}{\mathrm{polylog}}
\newcommand{\ket}[1]{|{#1}\rangle}
\newcommand{\bra}[1]{\langle{#1}|}
\newcommand{\inn}[2]{\langle{#1}|#2\rangle}
\newcommand{\Rmnum}[1]{\expandafter\@slowromancap\romannumeral #1@}
\newcommand{\udots}{\mathinner{\mskip1mu\raise1pt\vbox{\kern7pt\hbox{.}}
        \mskip2mu\raise4pt\hbox{.}\mskip2mu\raise7pt\hbox{.}\mskip1mu}}
\makeatother

\preprint{APS/123-QED}

\title{A general quantum matrix exponential dimensionality reduction framework based on block-encoding}

\author{Yong-Mei Li}
\affiliation{State Key Laboratory of Networking and Switching Technology, Beijing University of Posts and Telecommunications, Beijing, 100876, China}
\author{Hai-Ling Liu}
\affiliation{State Key Laboratory of Networking and Switching Technology, Beijing University of Posts and Telecommunications, Beijing, 100876, China}
\author{Shi-Jie Pan}
\affiliation{State Key Laboratory of Networking and Switching Technology, Beijing University of Posts and Telecommunications, Beijing, 100876, China}
\author{Su-Juan Qin}
\email{qsujuan@bupt.edu.cn}
\affiliation{State Key Laboratory of Networking and Switching Technology, Beijing University of Posts and Telecommunications, Beijing, 100876, China}
\author{Fei Gao}
\email{gaof@bupt.edu.cn}
\affiliation{State Key Laboratory of Networking and Switching Technology, Beijing University of Posts and Telecommunications, Beijing, 100876, China}
\author{Qiao-Yan Wen}
\affiliation{State Key Laboratory of Networking and Switching Technology, Beijing University of Posts and Telecommunications, Beijing, 100876, China}

\date{\today}

\begin{abstract}
As a general framework, Matrix Exponential Dimensionality Reduction (MEDR) deals with the small-sample-size problem that appears in linear Dimensionality Reduction (DR) algorithms. High complexity is the bottleneck in this type of DR algorithm because one has to solve a large-scale matrix exponential eigenproblem. To address it, here we design a general quantum algorithm framework for MEDR based on the block-encoding technique. This framework is configurable, that is, by selecting suitable methods to design the block-encodings of the data matrices, a series of new efficient quantum algorithms can be derived from this framework. Specifically, by constructing the block-encodings of the data matrix exponentials, we solve the eigenproblem and then obtain the digital-encoded quantum state corresponding to the compressed low-dimensional dataset, which can be directly utilized as input state for other quantum machine learning tasks to overcome the curse of dimensionality. As applications, we apply this framework to four linear DR algorithms and design their quantum algorithms, which all achieve a polynomial speedup in the dimension of the sample over their classical counterparts.
\end{abstract}

\pacs{Valid PACS appear here}
\maketitle

\section{Introduction}

The power of quantum computing is illustrated by quantum algorithms that solve specific problems, such as factoring~\cite{PS1994}, unstructured data search~\cite{GL1996}, linear systems~\cite{HHL} and cryptanalysis~\cite{lizhenqiang}, much more efficiently than classical algorithms. In recent years, a series of quantum algorithms for solving machine learning problems have been proposed and attracted the attention of the scientific community, such as clustering~\cite{Lloyd2013,Otterbach2017,Kerenidis2021}, dimensionality reduction \cite{QPCA,QLDA,QAOP,QNPE}, and matrix computation~\cite{wan2018,liu,wan2021}. Quantum Machine Learning (QML)~\cite{QML} has appeared as a remarkable emerging direction with great potential in quantum computing.

In the era of big data, Dimensionality Deduction (DR) has gained significant importance in machine learning and statistical analysis, which is a powerful technique to reveal the intrinsic structure of data and mitigate the effects of the curse of dimensionality~\cite{Hammer1961,PRML2006}. The essential task of DR is to find a mapping function $F:{\bf x}\mapsto {\bf y}$ that transforms ${\bf x}\in \mathbb{R}^M$ into the desired low-dimensional representation ${\bf y}\in \mathbb{R}^m$, where, typically, $m\ll M$. Most of the DR algorithms, such as Locality Preserving Projections (LPP)~\cite{LPP}, Unsupervised Discriminant Projection (UDP)~\cite{UDP}, Neighborhood Preserving Embedding (NPE)~\cite{NPE}, and Linear Discriminant Analysis (LDA)~\cite{LDA}, were unified into a general graph embedding framework proposed by Yan et al.~\cite{GE2007}. The framework provides a unified perspective for the understanding and comparison of many popular DR algorithms and facilitate the design of new algorithms. Unfortunately, almost all linear DR algorithms in the graph embedding framework encounter the well known Small-Sample-Size (SSS) problem that stems from the generalized eigenproblems with singular matrices. To tackle it, a general Matrix Exponential Dimensionality Reduction (MEDR) framework~\cite{GME2014} is proposed. In the framework, the SSS problem is solved by transforming the generalized eigenproblem into the eigenproblem $e^{-S_2}e^{S_1}{\bf v}=\lambda {\bf v}$, where the data matrices $S_1$ and $S_2$ have different forms for different algorithms. However, it involves the solution of the large-scale matrix exponential eigenproblem, resulting in the matrix-exponential-based DR algorithms being time-consuming for a large number of high-dimensional samples. Therefore, it would be of great significance to seek new strategies to speedup this type of DR algorithm.

There exists some work on quantum DR algorithms, which achieve different degrees of acceleration compared with classical algorithms~\cite{QPCA,QLDA,Yu2018QPCA,QLLE2019,QAOPDuan,QAOP,QNPEliang,Li2021nDR,PRA2021nDR,QLPP, QNPE,liDCCA,YU2023QLDA}. In the early stage, the work is focused on linear DR (i.e., $F$ is a linear function). Lloyd et al.~\cite{QPCA} proposed the first quantum DR algorithm, quantum Principal Component Analysis (PCA), which provides an important reference for many subsequent quantum algorithms. Later, the quantum nonlinear DR algorithms gradually emerged based on  manifold learning~\cite{QLLE2019,PRA2021nDR} and kernel method~\cite{Li2021nDR}. However, there is no quantum algorithm that efficiently realizes the matrix-exponential-based DR. Then an interesting question is whether we can design quantum algorithms for this type of DR algorithm in a unified way, which will provide computational benefits and facilitate the design of new quantum DR algorithms. We answer the question in the affirmative by using the method of block-encoding~\cite{ICALP2019SDP,Low2019,GSL,ICALP2019}.

Block-encoding is a good framework for implementing matrix arithmetic on quantum computers. The block-encoding $U$ of a matrix $A$ is a unitary matrix whose top-left block is proportional to $A$. Given $U$, one can produce the state $A|\phi\rangle/\|A|\phi\rangle\|$ by applying $U$ to an initial state $|0\rangle|\phi\rangle$. Low and Chuang~\cite{Low2019} showed how to perform optimal Hamiltonian simulation given a block-encoded Hamiltonian $H$. Based on this, Chakraborty et al.~\cite{ICALP2019} developed several tools within the block-encoding framework, such as singular value estimation, and quantum linear system solvers. Moreover, the block-encoding technique has been used to design QML algorithms, such as quantum classification~\cite{Shao2020} and quantum DR~\cite{liDCCA}.

In this paper, we apply the block-encoding technique to MEDR and design a general Quantum MEDR (QMEDR) framework. This framework is configurable, that is, for a matrix-exponential-based DR algorithm, one can take suitable methods to design the block-encodings of $S_1$ and $S_2$ and then construct the corresponding quantum algorithm to obtain the compressed low-dimensional dataset. Once these two block-encodings are implemented efficiently, the quantum algorithm will have a better running time compared to its classical counterpart. More specifically, the main contributions of this paper are as follows.

(a) Given a block-encoded Hermitian $H$, we present a method to implement the block-encoding of $e^H$ or $e^{-H}$ and derive the error upper bound. It provides a way for the computation of the matrix exponential which is one of the most important tasks in linear algebra~\cite{Higham2008}. To be specific, depending on the application, the computation of the matrix exponential may be to compute $e^A$ for a given square matrix $A$, to apply $e^A$ to a vector, and so on~\cite{Higham2008}. They are very time-consuming when $A$ is an exponentially large matrix. With the help of block-encodings, these tasks can be performed much faster on a quantum computer.

(b) By combining the block-encoding technique and quantum phase estimation~\cite{NC}, we solve the eigenproblem $e^{-S_2}e^{S_1}{\bf v}=\lambda {\bf v}$ and then construct the compressed digital-encoded state~\cite{QADC2019}, i.e., directly encode the compressed low-dimensional dataset into qubit strings in quantum parallel. The compressed state can be further utilized as input state for varieties of QML tasks, such as quantum $k$-medoids clustering~\cite{lym}, to overcome the curse of dimensionality. This builds a bridge between quantum DR algorithms and other QML algorithms. In addition, the proposed method for constructing the compressed state can be extended to other linear DR algorithms.

(c) As applications, we apply the QMEDR framework to four matrix-exponential-based DR algorithms, i.e., Exponential LPP (ELPP)~\cite{ELPP}, Exponential UDP (EUDP)~\cite{GME2014}, Exponential NPE (ENPE)~\cite{ENPE}, and Exponential Discriminant Analysis (EDA)~\cite{EDA}, and design their quantum algorithms. The results show that all of these quantum algorithms achieve a
polynomial speedup in the dimension of the sample over the classical algorithms. Moreover, for the number of samples, the quantum ELPP and quantum ENPE algorithms achieve a polynomial speedup over their classical counterparts and the quantum EDA algorithm provides an exponential speedup.

The remainder of the paper proceeds as follows. In Sec.~\ref{sec:Pre}, we review the MEDR framework in Sec.~\ref{subsec:MEDR} and review the block-encoding framework in Sec.~\ref{subsec:BK}. In Sec.~\ref{sec:QAF}, we propose the QMEDR framework in Sec.~\ref{sub:alg} and analyze its complexity in Sec.~\ref{sub:ca}. In Sec.~\ref{Sec:exm}, we present some applications of the QMEDR framework. In Sec.~\ref{Sec:dis}, we discuss the main idea of the QMEDR framework and compare the framework with the related work. The conclusion is given in Sec.~\ref{Sec:con}.

\section{Preliminaries}\label{sec:Pre}

\subsection{MEDR framework}\label{subsec:MEDR}

In this subsection, we first review the linear DR in the view of graph embedding and then review the general MEDR framework and analyze its complexity.

Let $X=[{\bf x}_0,{\bf x}_1,...,{\bf x}_{N-1}]^T\in \mathbb{R}^{N\times M}$ denote the original data matrix. DR aims to seek an optimal transformation to map the $M$-dimensional sample ${\bf x}_i$ onto a $m$-dimensional ($m\ll M$) sample ${\bf y}_i$, $i=0,1,\cdots,N-1$. Many DR algorithms were unified into a general graph embedding framework~\cite{GE2007}. In the framework, an undirected weighted graph $G=\{\mathcal{X},S\}$ with vertex set $\mathcal{X}=\{{\bf x}_i\}_{i=0}^{N-1}$ and similarity matrix $S=(S_{ij})\in\mathbb{R}^{N\times N}$ is defined to characterize
the original dataset, where $S_{ij}$ measures the similarity of a pair of vertices ${\bf x}_i$ and ${\bf x}_j$. The graph-preserving criterion is
\begin{align}\label{GPC}
{\bf y}^{*}=\mathop{\arg\min}\limits_{{\bf y}^TB{\bf y}=d} \sum_{i\neq j}\|{\bf y}_i-{\bf y}_j\|^2S_{ij}=\mathop{\arg\min}\limits_{{\bf y}^TB{\bf y}=d}{\bf y}^TL{\bf y},
\end{align}
where $B$ is a constraint matrix, $d$ is a constant, $L=D-S$ is the Laplacian matrix, $D$ is a diagonal matrix, and  $D_{ii}=\sum_{j\neq i}S_{ij}$.
Note that $\|\cdot\|$ is the $l_2$-norm of a vector or the spectral norm of a matrix in this paper.

Let ${\bf y}_i=W^T{\bf x}_i $ be the linear mapping from the original sample ${\bf x}_i$ onto the desired low-dimensional representation ${\bf y}_i$. Then Eq.(\ref{GPC}) can be reformulated as
\begin{align}\label{equ:obj}
W^{*}=\mathop{\arg\min}\limits_{W} \bigg|\big(W^TS_2W\big)^{-1}\big(W^TS_1W\big)\bigg|,
\end{align}
where $S_1=L$ or $X^TLX$, and $S_2=I$, $B$, or $X^TBX$. 

The solutions of Eqs.(\ref{GPC}) and (\ref{equ:obj}) are obtained by solving the generalized eigenproblem
\begin{align}\label{GE}
S_1{\bf v}=\lambda S_2{\bf v}.
\end{align}
Let ${\bf v}_0,{\bf v}_1,\cdots,{\bf v}_{m-1}$ be the generalized eigenvectors corresponding to the first $m$ smallest generalized eigenvalues. Then $W=[{\bf v}_0,{\bf v}_1,\cdots,{\bf v}_{m-1}]$ is the optimal transformation matrix and the desired low-dimensional data matrix $Y=XW=[{\bf y}_0,{\bf y}_1,\cdots,{\bf y}_{N-1}]^T$.

However, in many cases of real life, $N< M$, resulting in the matrices $S_1$ and $ S_2$ being singular and then Eq.(\ref{GE}) is unsolvable. This is a so-called SSS problem. To solve it, the general MEDR framework is proposed~\cite{GME2014}. It replaced the matrices $S_1$ and $S_2$ in Eq.(\ref{GE}) with their matrix exponentials~\cite{Higham2008} respectively. That is to say, the MEDR framework should solve the eigenproblem
\begin{align}\label{equ:Ess}
e^{-S_2}e^{S_1}{\bf v}=\lambda {\bf v}
\end{align}
and take the first $m$ principal eigenvectors to obtain the optimal transformation matrix. The flowchart of the MEDR framework is shown in FIG.~\ref{figure1}.

\begin{figure}[htbp!]
	
	\begin{minipage}{1\linewidth}
		\vspace{3pt}
		\centerline{\includegraphics[width=1\textwidth]{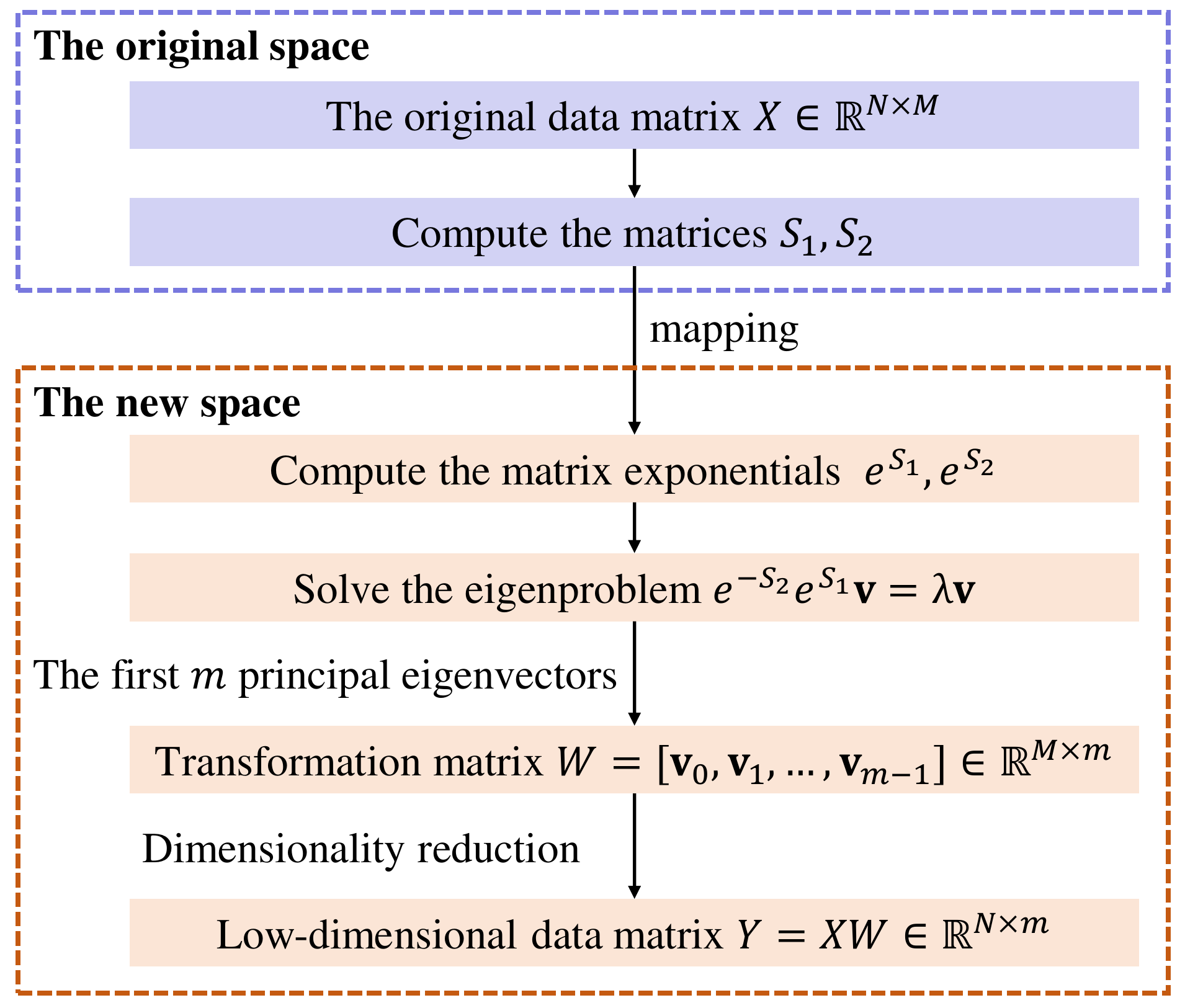}}
	\end{minipage}
	
	\caption{Flowchart of the general MEDR framework.}
	\label{figure1}
\end{figure}

The matrix exponential method solves the SSS problem well. Nevertheless, its high complexity constitutes the bottleneck in this type of matrix-exponential-based algorithm because one has to compute the multiplications of matrices and the exponentials of matrices and to solve a matrix exponential eigenproblem. First, for a matrix in the type of  $X^TFX$, the time complexity of computing it is  $O(N^2M+NM^2)$, where $F$ is a square matrix of order $N$. Note that although $N$ is less than $M$ and can be ignored, we keep $N$ here for a clearer representation of the parameter relationship. Second, in general, for a given $M$-by-$M$ matrix, the complexity of computing its matrix exponential is $O(M^3)$~\cite{matrixE,Higham2008}. Third, the complexity of solving the eigenproblem is $O(M^3)$ for a square matrix of order $M$. Add them up, the total time complexity is $O(N^2M+M^3)$ which brings difficulty to the practical applications of this type of algorithm when dealing with a larger number of high-dimensional data. Therefore, it is necessary to seek new strategies to speedup the matrix-exponential-based DR algorithms.

\subsection{Block-encoding framework}\label{subsec:BK}

In this subsection, we briefly review the framework of block-encoding. For convenience, we use $\log (\cdot )$ to denote $\log_2(\cdot)$ throughout the paper.

\begin{definition}(block-encoding~\cite{GSL}).
Suppose that $A$ is an $s$-qubit operator, $\alpha,\epsilon \in \mathbb{R}_+$, and $a\in \mathbb{N}$, then we say that the $(s+a)$-qubit unitary $U$ is an $(\alpha, a,\epsilon)$-block-encoding of $A$, if
\begin{align}
\|A-\alpha(\langle0|^{\otimes a}\otimes I)U(|0\rangle^{\otimes a}\otimes I)\|\leq \epsilon.
\end{align}
\end{definition}

Note that since $\|U\|=1$, we necessarily have $\|A\|\leq\alpha+\epsilon$. Moreover, the above definition is not only restricted to square matrices. When $A$ is not a square matrix, we can define an embedding square matrix $A_e$ such that the top-left block of $A_e$ is $A$ and all other elements are 0.

There are some methods to implement the block-encodings for specific matrices, such as sparse matrices~\cite{GSL,Low2019}, density operators~\cite{ICALP2019SDP,Low2019,GSL}, Gram matrices~\cite{GSL}, and matrices stored in structured Quantum Random Access Memories (QRAMs)~\cite{CGJ,ICALP2019,KP2020}.

The main motivation for using block-encodings is to perform optimal-Hamiltonian simulation for a given block-encoded
Hamiltonian.

\begin{theorem}(Optimal block-Hamiltonian simulation~\cite{LC2016,CGJ,SDP}).\label{the:OBC}
Suppose that $U$ is an $(\alpha, a, \frac{\epsilon}{|2t|})$-block-encoding of the Hamiltonian $H$. Then we can implement an $\epsilon$-precise Hamiltonian
simulation unitary $V$ which is an $(1, a + 2, \epsilon)$-block-encoding of $e^{iHt}$, with $O(|\alpha t|+\frac{\log(1/\epsilon)}{\log\log(1/\epsilon)})$ uses of controlled-$U$ or its inverse and with $O(a|\alpha t|+a\frac{\log(1/\epsilon)}{\log\log(1/\epsilon)})$ two-qubit gates.
\end{theorem}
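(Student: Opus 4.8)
The plan is to build $V$ through the qubitization-plus-quantum-signal-processing pipeline. First I would convert the block-encoding $U$ into a \emph{walk operator} (qubiterate) $W$ built from controlled-$U$, controlled-$U^\dagger$, and a reflection $2(|0\rangle\langle0|^{\otimes a}\otimes I)-I$ about the flag subspace. Standard qubitization shows that $W$ leaves invariant the two-dimensional subspaces $\mathrm{span}\{|0\rangle^{\otimes a}|\lambda\rangle,\,|\perp_\lambda\rangle\}$ attached to each eigenvector $|\lambda\rangle$ of $H$ with eigenvalue $\lambda$, and on each such subspace $W$ acts as a rotation by angle $\pm\theta_\lambda$ with $\cos\theta_\lambda=\lambda/\alpha$. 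This step costs $O(1)$ queries to controlled-$U$ and $O(a)$ two-qubit gates for the multi-controlled reflection.

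Next I would apply quantum signal processing to the flag qubit controlling $W$: interleaving $d$ applications of $W$ with single-qubit $z$-rotations by a suitable phase sequence $\{\phi_k\}$ realizes, inside the enlarged block, a degree-$d$ polynomial $P$ of the signal $\cos\theta_\lambda=\lambda/\alpha$. The target is to make the block encode $x\mapsto e^{i\alpha t\,x}$ evaluated at $x=\lambda/\alpha$, since then the block acts as $e^{i\lambda t}$ on each $|\lambda\rangle$ eigenspace, which is exactly $e^{iHt}$.

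The analytic heart of the argument is approximating $e^{i\tau x}$, with $\tau=\alpha t$, on $[-1,1]$ by a polynomial of the stated degree. I would use the Jacobi--Anger expansion, writing the even part $\cos(\tau x)=J_0(\tau)+2\sum_{k\ge1}(-1)^kJ_{2k}(\tau)T_{2k}(x)$ and the odd part $\sin(\tau x)=2\sum_{k\ge0}(-1)^kJ_{2k+1}(\tau)T_{2k+1}(x)$ in Chebyshev polynomials, then truncate each at degree $d$. The truncation error is controlled by the tail $\sum_{k>d}|J_k(\tau)|$; because $J_k(\tau)\lesssim(\tau/2)^k/k!$ decays super-exponentially once $k\gtrsim|\tau|$, a Stirling estimate shows that $d=O\!\left(|\tau|+\frac{\log(1/\epsilon)}{\log\log(1/\epsilon)}\right)$ suffices to push the tail below $\epsilon/2$. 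This factorial (rather than merely exponential) decay is precisely what produces the $\log(1/\epsilon)/\log\log(1/\epsilon)$ scaling, and getting that inversion right is the main obstacle. I would then realize the even and odd truncations by two QSP sequences and combine them with a one-qubit linear-combination-of-unitaries circuit to assemble $e^{i\tau x}$; this LCU step contributes one extra flag qubit, and the standard-form conversion of $U$ required by qubitization the other, giving $a+2$ flag qubits while keeping the subnormalization equal to $1$ (as $|e^{i\tau x}|\le1$).

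Finally I would assemble the counts and the error budget. Each term of the expansion uses one query to $W$, hence to controlled-$U$ or its inverse, so the query count is $O(d)=O(|\alpha t|+\log(1/\epsilon)/\log\log(1/\epsilon))$, and since each $W$ carries an $O(a)$-gate reflection the two-qubit gate count is $O(a\,d)$, matching the claim. For accuracy I would split the error into two pieces: the polynomial truncation contributes at most $\epsilon/2$ by the choice of $d$, while the imperfection of the input block-encoding contributes at most $\|e^{iHt}-e^{iH't}\|\le|t|\,\|H-H'\|\le|t|\cdot\frac{\epsilon}{|2t|}=\epsilon/2$ via the standard bound on exponentials of nearby Hermitians. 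Summing the two shows $V$ is a $(1,a+2,\epsilon)$-block-encoding of $e^{iHt}$, as claimed.
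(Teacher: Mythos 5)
The paper does not prove this theorem itself --- it is imported verbatim from the cited references (Low--Chuang qubitization and its restatements in~\cite{CGJ,SDP}), so there is no in-paper proof to compare against. Your sketch follows exactly the argument used in those sources (qubitization walk operator, quantum signal processing, Jacobi--Anger/Chebyshev truncation with the Bessel-tail bound giving the $\log(1/\epsilon)/\log\log(1/\epsilon)$ term, LCU recombination, and the $|t|\,\|H-H'\|$ error split), and it is correct as an outline of that standard proof.
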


Furthermore, the block-encoding technique has been applied to many quantum algorithms, such as quantum linear systems~\cite{Low2019,ICALP2019,wan2021}, and quantum mean centering~\cite{liuMC}. In the following section, we will apply the block-encoding framework to MEDR and design a general quantum algorithm framework for MEDR to speedup the matrix-exponential-based DR algorithms.

\section{General QMEDR framework based on block-encoding}\label{sec:QAF}

In this section, we will detail the general quantum algorithm framework that performs matrix-exponential-based DR more efficiently than what is achievable classically under certain conditions. For simplicity, let $\widetilde{O}(\cdot)$ symbolize algorithm complexity where parameters with polylogarithmic dependence are hidden.

Assume that the original data matrix $X=[{\bf x}_0,{\bf x}_1,...,{\bf x}_{N-1}]^T\in \mathbb{R}^{N\times M}$ is stored in a structured QRAM~\cite{QRAMDS1,QRAMDS2,ICALP2019}. Then there exists a quantum algorithm that can perform
the mapping
\begin{align}
\textbf{O}_1: |i\rangle|0\rangle\mapsto |i\rangle|\|{\bf x}_i\|\rangle
\end{align}
in time $O(\text{poly}\log(NM))$ and perform
the following mappings with $\varepsilon_x$-precision in time $O(\text{poly}\log(\frac{NM}{\varepsilon_x}))$:
\begin{align}
&\textbf{O}_2: |i\rangle|0\rangle\mapsto |i\rangle|{\bf x}_i\rangle=|i\rangle\frac{1}{\|{\bf x}_i\|}\sum_{j=0}^{M-1}x_{ij}|j\rangle,\nonumber
\\&\textbf{O}_3: |0\rangle|j\rangle\mapsto \frac{1}{\|X\|_F}\sum_{i=0}^{N-1}\|{\bf x}_i\||i\rangle|j\rangle,
\end{align}
where $x_{ij}$ denotes the $(i,j)$-entry of $X$ and $\|\cdot\|_F$ is the Frobenius norm for a matrix.

Based on this assumption, the framework can be summarized as the following theorem.

\begin{theorem}(QMEDR framework).\label{QMEDR}
For an algorithm belonging to the MEDR framework, let $\kappa_1,\kappa_2\geq 2$ such that $\frac{I}{\kappa_1}\preceq S_1\preceq I$, $\frac{I}{\kappa_2}\preceq S_2\preceq I$. Suppose that $U_1$ is an $(\alpha, a, \varepsilon)$-block-encoding of $S_1$ and $U_2$ is an $(\beta, b, \delta)$-block-encoding of $S_2$, which can be implemented in times $T_1$ and $T_2$ respectively. Then there exists a quantum algorithm to produce the compressed digital-encoded state
\begin{align}
\frac{1}{\sqrt{Nm}}\sum_{i=0}^{N-1}\sum_{j=0}^{m-1}|i\rangle|j\rangle|y_{ij}\rangle :=|\psi\rangle
\end{align}
in time $\widetilde{O}(\frac{(T+a+b)\max_i\|{\bf x}_i\|^2m\sqrt{M}}{\epsilon})$, where $T=\widetilde{O}\big(\max\{\alpha\kappa_1(a+T_1),\beta\kappa_2(b+T_2)\}\big)$, $y_{ij}$ is the $(i,j)$-entry of the low-dimensional data matrix $Y$ and its error is $\epsilon$.
\end{theorem}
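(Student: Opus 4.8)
The plan is to turn the non-Hermitian eigenproblem of Eq.~(\ref{equ:Ess}) into a Hermitian one, solve it by quantum phase estimation on a Hamiltonian simulation, and then convert the resulting inner products into a digital register. First I would build block-encodings of the needed matrix exponentials: starting from the $(\alpha,a,\varepsilon)$-block-encoding $U_1$ of $S_1$ and the $(\beta,b,\delta)$-block-encoding $U_2$ of $S_2$, I would apply our matrix-exponential construction (contribution~(a), which itself rests on Theorem~\ref{the:OBC}) to obtain block-encodings of $e^{S_1}$ and of $e^{-S_2/2}$. The hypotheses $\frac{I}{\kappa_1}\preceq S_1\preceq I$ and $\frac{I}{\kappa_2}\preceq S_2\preceq I$ confine the spectra to $[e^{1/\kappa_1},e]$ and $[e^{-1/2},e^{-1/(2\kappa_2)}]$ respectively, which fixes the subnormalization factors and guarantees the eventual product is well conditioned; tracking how $\kappa_1,\kappa_2$ enter the simulation precision and the success probabilities is exactly what yields the two branches $\alpha\kappa_1(a+T_1)$ and $\beta\kappa_2(b+T_2)$ defining $T$.

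The crucial reformulation is symmetrization. Because $e^{-S_2}e^{S_1}$ is a product of positive-definite matrices it is not Hermitian, so phase estimation cannot be applied to it directly; instead I would work with $\widetilde{A}:=e^{-S_2/2}e^{S_1}e^{-S_2/2}$, whose block-encoding is obtained by composing the three exponential block-encodings. $\widetilde{A}$ is Hermitian with the same spectrum as $e^{-S_2}e^{S_1}$, and its orthonormal eigenvectors ${\bf u}_j$ are related to the sought eigenvectors by ${\bf v}_j=e^{-S_2/2}{\bf u}_j$. Consequently $y_{ij}={\bf x}_i^{T}{\bf v}_j=\inn{e^{-S_2/2}{\bf x}_i}{{\bf u}_j}=\inn{\widetilde{\bf x}_i}{{\bf u}_j}$ with $\widetilde{\bf x}_i:=e^{-S_2/2}{\bf x}_i$, so every target entry is a genuine inner product against an orthonormal eigenbasis. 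I would then (i) prepare $\frac{1}{\sqrt N}\sum_i\ket{i}\ket{{\bf x}_i}$ via $\textbf{O}_2$, (ii) apply the block-encoding of $e^{-S_2/2}$ to reach $\frac{1}{\sqrt N}\sum_i\ket{i}\ket{\widetilde{\bf x}_i}$ (up to normalization), (iii) run phase estimation of $e^{i\widetilde{A}t}$, simulated through Theorem~\ref{the:OBC}, to tag each eigenvalue $\lambda_j$, and (iv) keep only the $m$ principal eigenvalues; the amplitude on $\ket{{\bf u}_j}$ is then proportional to $y_{ij}$.

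The final step writes these amplitudes into a digital register with the quantum analog-to-digital conversion of Ref.~\cite{QADC2019}, producing $\ket{\psi}$. I expect the end-to-end error analysis to be the main obstacle: the block-encoding errors $\varepsilon,\delta$, the Hamiltonian-simulation error, the phase-estimation resolution required to cleanly isolate the top-$m$ eigenvalues, and the amplitude-estimation error of the conversion must all be budgeted so that the output is $\epsilon$-close to $\ket{\psi}$, while the subnormalization factors accumulated through the product block-encoding are propagated correctly. The conversion step also dominates the cost: resolving each $y_{ij}$ to additive precision $\epsilon$ needs $\widetilde{O}(1/\epsilon)$ rounds of amplitude estimation, and since $\ket{\widetilde{\bf x}_i}$ lives in an $M$-dimensional space while the physical magnitude is rescaled by $\max_i\|{\bf x}_i\|$, the associated success probability contributes the $\max_i\|{\bf x}_i\|^2\sqrt{M}$ and $m$ factors. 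Combining one use of $\widetilde{A}$ and of $e^{-S_2/2}$ at cost $T$ (plus the $a+b$ ancilla overhead) with these factors reproduces the stated runtime $\widetilde{O}\big(\frac{(T+a+b)\max_i\|{\bf x}_i\|^2 m\sqrt M}{\epsilon}\big)$.
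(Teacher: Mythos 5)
Your symmetrization idea is a genuinely different (and in one respect cleaner) route than the paper's: the paper never forms $e^{-S_2/2}e^{S_1}e^{-S_2/2}$; instead it builds a block-encoding of $e^{-S_2}e^{S_1}$ directly (Lemma~\ref{lem:expH} plus product of block-encodings) and, when that product is not Hermitian, passes to the dilation $\overline{H}=|0\rangle\langle1|\otimes H+|1\rangle\langle0|\otimes H^{\dag}$ of Appendix~\ref{app:notH2}. However, your proposal has concrete gaps that the paper's proof fills and that your sketch does not. First, you never say how the ``$m$ principal eigenvalues'' are identified and indexed. After phase estimation the eigenvalue register holds $|\lambda_j\rangle$ for all $M$ eigenvalues, and selecting the $m$ smallest (and mapping $|\lambda_j\rangle\mapsto|j\rangle$) is a nontrivial search problem; the paper solves it by $m$ rounds of the quantum minimum-finding algorithm at $O(\sqrt{M})$ queries each, and by amplitude amplification to project $|\Psi\rangle$ onto the span of the top-$m$ branch. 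This is precisely where the $m\sqrt{M}$ factor in the stated runtime originates. Your alternative attribution of the $\sqrt{M}$ to ``the success probability'' of the analog-to-digital conversion because $|\widetilde{\bf x}_i\rangle$ lives in an $M$-dimensional space is not a derivation and does not substitute for the missing selection step.

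Second, two quantitative issues in your reduction would corrupt the output values. (i) Normalization: if ${\bf u}_j$ is a unit eigenvector of $\widetilde{A}$, then ${\bf v}_j=e^{-S_2/2}{\bf u}_j$ is \emph{not} unit-norm, whereas the theorem's $y_{ij}=\|{\bf x}_i\|\langle{\bf x}_i|{\bf v}_j\rangle$ is defined with respect to the normalized eigenvectors of $e^{-S_2}e^{S_1}$; your $\inn{\widetilde{\bf x}_i}{{\bf u}_j}$ therefore differs from $y_{ij}$ by the unknown per-$j$ factor $\|e^{-S_2/2}{\bf u}_j\|$, which you would have to estimate and divide out. (ii) Sign: reading an amplitude into a digital register via amplitude estimation yields magnitudes; the paper handles this by computing $(\langle{\bf x}_i|{\bf v}_j\rangle)^2$ with the parallel inner-product estimation of Lemma~\ref{lem:inner} and arguing that one may take $\langle{\bf x}_i|{\bf v}_j\rangle\ge0$ without loss of generality (Appendix~\ref{app:Hadamardtest} explains why a direct signed estimate fails here). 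Until you specify the selection subroutine, the renormalization, and the sign convention, the claimed runtime and the correctness of $|\psi\rangle$ do not follow from your outline.
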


To prove Theorem~\ref{QMEDR}, we give the framework details in Sec.~\ref{sub:alg} and analyze its complexity in Sec.~\ref{sub:ca}.

\subsection{QMEDR framework}\label{sub:alg}

The core of the QMEDR framework is to extract the first $m$ principal eigenvectors of $e^{-S_2}e^{S_1}$ and then construct the compressed digital-encoded state. To achieve it, we first use the block-encoding technique to simulate $e^{-S_2}e^{S_1}$, which can be done by constructing its block-encoding. With it, we extract the first $m$ smallest eigenvalues and the associated eigenvectors by combining quantum phase estimation~\cite{NC} with the quantum minimum-finding algorithm~\cite{1996min}. Next, we construct the compressed digital-encoded quantum state by inner product estimation~\cite{qmeans2019}. The QMEDR framework consists of three steps, we now detail them one by one.

\subsubsection{Construct the block-encoding of $e^{-S_2}e^{S_1}$}

We first give the following lemma, which is inspired by~\cite{CGJ} and allows us to construct the block-encodings of $e^{S_1}$ and $e^{-S_2}$.

\begin{lemma}(Implementing the block-encodings of matrix exponential and its inverse).\label{lem:expH}
Let $c\in\{1,-1\}$, $\epsilon\in(0,\frac{1}{2}]$, $\kappa\geq 2$ and $H$ be a Hermitian matrix such that $\frac{I}{\kappa}\preceq H\preceq I$. If for some fixed $\omega\in \mathbb{R}_+$ we have $\delta\leq\omega\epsilon/\big(\kappa\log(\frac{1}{\epsilon})\log^2(\kappa\log(\frac{1}{\epsilon}))\big)$
and $U$ is an $(\alpha,a,\delta)$-block-encoding of $H$ which can be implemented in time $T_U$, then we can implement a unitary $\tilde{U}$ that is an $(e^2,a+O(\log(\kappa\log(\frac{1}{\epsilon}))),e^2\epsilon)$-block-encoding of $e^{cH}$ in cost
$O\big(\alpha\kappa\log(\frac{1}{\epsilon})(a+T_U)+\kappa\log(\frac{\kappa}{\epsilon})\log(\frac{1}{\epsilon})\big)$.
\end{lemma}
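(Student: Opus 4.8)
The plan is to synthesize the non-unitary operator $e^{cH}$ as a block-encoding by approximating the scalar exponential $e^{cx}$ on the spectrum of $H$ by a suitable bounded polynomial and then implementing that polynomial of $H$ through the block-encoding $U$, following the function-implementation strategy of \cite{CGJ}. Because $U$ is an $(\alpha,a,\delta)$-block-encoding, it gives access to $H/\alpha$, whose eigenvalues lie in $\mathcal{I}=[\frac{1}{\alpha\kappa},\frac{1}{\alpha}]$. I would therefore look for a polynomial $P$ with $P(y)\approx e^{c\alpha y}/e^{2}$ on $\mathcal{I}$ and $|P(y)|\le 1$ on $[-1,1]$. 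The normalization by $e^{2}$ is chosen because on $\mathcal{I}$ the target equals $e^{cx}/e^{2}$ with $x\in[\frac{1}{\kappa},1]$, so $|e^{cx}|\le e$ for both $c=+1$ and $c=-1$, and dividing by $e^{2}$ leaves the target below $e^{-1}$ with enough margin to enforce the global bound $|P|\le1$; this subnormalization is exactly what yields an $(e^{2},\cdot,\cdot)$-block-encoding of $e^{cH}$ at the end.

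The decisive step, and the one I expect to be hardest, is producing such a $P$ with the advertised degree. As in \cite{CGJ}, I would combine a truncated Chebyshev (or Taylor) expansion of $e^{cx}$, which becomes $\epsilon$-accurate on $[\frac{1}{\kappa},1]$ after $O(\log(\frac{1}{\epsilon}))$ correction terms, with a smooth rectangle-type window that is $1$ on $\mathcal{I}$, bounded by $1$ on $[-1,1]$, and decays outside $\mathcal{I}$ so as to kill the growth of the bare series. The constraining feature is the left endpoint $\frac{1}{\alpha\kappa}$: resolving it requires a window of transition scale $\sim\frac{1}{\alpha\kappa}$ and hence degree $\sim\alpha\kappa$, so the product has degree $d=O(\alpha\kappa\log(\frac{1}{\epsilon}))$. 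Establishing simultaneously the $\epsilon$-accuracy on $\mathcal{I}$ and the uniform bound $|P|\le1$ at this degree is the main technical obstacle, and it is the precise analogue of the bounded approximation of $\frac{1}{x}$ underlying block-encoded matrix inversion.

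With $P$ fixed, I would implement $P(H/\alpha)$ by expanding it into $K=O(\kappa\log(\frac{1}{\epsilon}))$ simulable pieces, each a short block-Hamiltonian evolution realized by Theorem~\ref{the:OBC}, and recombining them by linear combination of unitaries; the select register for the $K$ terms uses $O(\log(\kappa\log(\frac{1}{\epsilon})))$ ancillas, accounting for the stated qubit overhead $a+O(\log(\kappa\log(\frac{1}{\epsilon})))$. The output is an $(e^{2},a+O(\log(\kappa\log(\frac{1}{\epsilon}))),\cdot)$-block-encoding of $e^{cH}$. For the error I would separate two contributions: the polynomial-approximation error enters the block-encoding error as $O(e^{2}\epsilon)$, while the input imprecision $\delta$ is amplified by the transformation. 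Using the robustness of the construction, the induced error scales as $d\,\delta$ up to the logarithmic factors generated by the windowing, so requiring it to stay below $\epsilon$ reproduces the stated threshold $\delta\le\omega\epsilon/\big(\kappa\log(\frac{1}{\epsilon})\log^{2}(\kappa\log(\frac{1}{\epsilon}))\big)$. Finally, each of the $O(\alpha\kappa\log(\frac{1}{\epsilon}))$ calls to controlled-$U$ (from Theorem~\ref{the:OBC}, at cost $O(\alpha)$ per unit-time evolution times the series length) costs $O(a+T_{U})$, while the classical synthesis of the expansion coefficients and phase angles contributes the additive $O(\kappa\log(\frac{\kappa}{\epsilon})\log(\frac{1}{\epsilon}))$ gate term, giving the claimed cost.
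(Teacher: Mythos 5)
Your high-level architecture --- approximate $e^{cx}$ by a bounded series, implement it by a linear combination of controlled Hamiltonian-simulation segments built from $U$, and track how the input imprecision $\delta$ propagates --- is the same as the paper's, which simply instantiates the smooth-function machinery of~\cite{CGJ} (Theorem~\ref{pro2} combined with Lemma~\ref{pro1}). However, the step you yourself single out as decisive, namely producing a degree-$O(\alpha\kappa\log(\frac{1}{\epsilon}))$ polynomial that is $\epsilon$-accurate on $[\frac{1}{\alpha\kappa},\frac{1}{\alpha}]$ and uniformly bounded on $[-1,1]$, is left unproved, and the mechanism you give for why the degree must scale like $\alpha\kappa$ is wrong. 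You argue that the left spectral endpoint $\frac{1}{\alpha\kappa}$ must be ``resolved'' by a window of transition scale $\sim\frac{1}{\alpha\kappa}$, calling this ``the precise analogue of the bounded approximation of $\frac{1}{x}$.'' The analogy fails: $e^{cx}$ is entire and satisfies $|e^{cx}|\le e$ on all of $[-1,1]$ for $c=\pm1$, so nothing singular happens at the lower spectral edge and no fine window there is needed; in a polynomial picture a window is only needed to tame $e^{c\alpha y}$ for $|y|>\frac{1}{\alpha}$, i.e.\ at scale $\frac{1}{\alpha}$, which would introduce no $\kappa$ at all. Similarly, your explanation of the subnormalization $e^{2}$ (``$|e^{cx}|\le e$ plus margin'') is a plausible guess rather than a derivation.

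In the paper the $\kappa$ and the $e^{2}$ arise from a different, essentially bookkeeping, mechanism: one expands $f(x)=e^{x}$ around $x_0=1$, so that $\|H-x_0I\|\le r:=1-\frac{1}{\kappa}$, and applies Theorem~\ref{pro2} with margin $\delta:=\frac{1}{\kappa}$; then $r+\delta=1$ and $B=\sum_l(r+\delta)^l|a_l|=\sum_n\frac{2^n}{n!}=e^{2}$, which is the true origin of the normalization, while the controlled-simulation time in Lemma~\ref{pro1} is $\mathcal{M}\gamma=O\big(\frac{r\log(1/\epsilon)}{\delta}\cdot\frac{1}{r}\big)=O(\kappa\log(\frac{1}{\epsilon}))$, which is the true origin of the $\alpha\kappa\log(\frac{1}{\epsilon})$ query count and, via the $(J+1)^2$ factor with $J=O(\log(\kappa\log\frac{1}{\epsilon}))$, of the $\log^{2}(\kappa\log(\frac{1}{\epsilon}))$ in the admissible $\delta$. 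The cost is thus driven by the choice of expansion point and margin that keeps $B=e^{2}$ constant, not by any approximation-theoretic obstruction at $\frac{1}{\alpha\kappa}$. To complete your argument you would need either to carry out this computation or to actually exhibit and bound the windowed polynomial you postulate; as written, the central quantitative claim rests on a false analogy.
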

\begin{proof}
See Appendix~\ref{app:proofexpH}.
\end{proof}

Note that the above lemma is not only restricted to Hermitian matrices. For a non-Hermitian matrix, one can extend it to an embedding Hermitian matrix by Lemma~\ref{lem:2H}.

Given $U_1$ and $U_2$ which are the block-encodings of $S_1$ and $S_2$ respectively, we can implement an $(e^2,a+O(\log(\kappa_1\log(\frac{1}{\epsilon_1}))),e^2\epsilon_1)$-block-encoding of $e^{S_1}$ and an $(e^2,b+O(\log(\kappa_2\log(\frac{1}{\epsilon_2}))),e^2\epsilon_2)$-block-encoding of $e^{-S_2}$ by Lemma~\ref{lem:expH}, where $\epsilon_1,\epsilon_2\in (0,\frac{1}{2}]$. Then an $(e^4,a+b+O(\log(\kappa_1\log(\frac{1}{\epsilon_1}))+\log(\kappa_2\log(\frac{1}{\epsilon_2}))),e^4(\epsilon_1+\epsilon_2))$-block-encoding of $e^{-S_2}e^{S_1}$ is constructed by product of block-encoded matrices~\cite{GSL}. Based on this, we can simulate $e^{-S_2}e^{S_1}$ by Theorem~\ref{the:OBC}.

\subsubsection{Extract the first $m$ principal eigenvalues}

Here we first use quantum phase estimation~\cite{NC} to reveal the eigenvalues and eigenvectors of $e^{-S_2}e^{S_1}$, then use the quantum minimum-finding algorithm~\cite{1996min} to search the first $m$ smallest eigenvalues. The details are as follows.

(2.1) Reveal eigenvalues and eigenvectors of $e^{-S_2}e^{S_1}$.

For simplicity, we assume that the matrix $e^{-S_2}e^{S_1}$ is Hermitian. If not, we can extend it to an embedding Hermitian matrix, see Appendix~\ref{app:notH2}. Given the block-encoding of $e^{-S_2}e^{S_1}$, the unitary $e^{\imath (e^{-S_2}e^{S_1})t}$ can be implemented according to Theorem~\ref{the:OBC}, where $\imath^2=-1$. By using it, we apply quantum phase estimation on the first two registers of the state
\begin{align}\label{equ:ent}
|0\rangle^{\otimes q_1}\frac{1}{\sqrt{M}}\sum_{i=0}^{M-1}|i\rangle|i\rangle,
\end{align}
then we obtain the state
 \begin{align}\label{equ:sta1}
\frac{1}{\sqrt{M}}\sum_{i=0}^{M-1}|\lambda_i\rangle|{\bf v}_i\rangle|{\bf v}_i\rangle :=|\Psi\rangle,
\end{align}
where $\lambda_i$ and $|{\bf v}_i\rangle$ are the eigenvalue and the corresponding eigenvector of $e^{-S_2}e^{S_1}$, the value of $q_1$ determines the accuracy of phase estimation and we discuss it in Sec.~\ref{sub:ca}. Such entangled state in Eq.(\ref{equ:ent}) can be efficiently constructed by using Hadamard gates and CNOT gates, as shown in FIG.~\ref{figure2}, and its partial trace over the first two registers is proportional to the identity matrix.

\begin{figure}[htbp!]
	
	\begin{minipage}{1\linewidth}
		\vspace{3pt}
		\centerline{\includegraphics[width=0.95\textwidth]{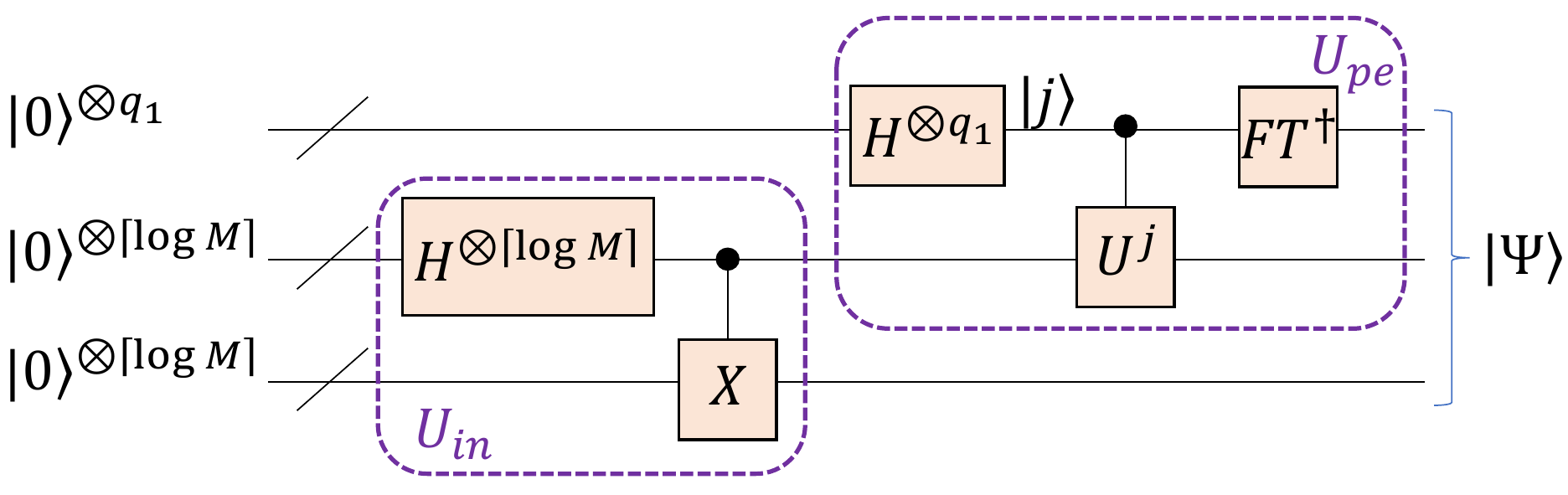}}
	\end{minipage}
	
	\caption{Quantum circuit for preparing $|\Psi\rangle$ of step 2 in the QMEDR framework, where ``/" denotes a bundle of wires, $H$ is the Hadamard gate, $X$ is the NOT gate, $U=e^{\imath(e^{-S_2}e^{S_1})t}$ and $FT$ denotes the quantum Fourier transformation.}
	\label{figure2}
\end{figure}

(2.2) Find the first $m$ smallest eigenvalues.

To invoke quantum minimum-finding algorithm, we should be able to construct an oracle $\textbf{O}_\chi$ to mark the item $\lambda_i\leq\gamma$, where $\gamma$ is a threshold parameter and can be determined by measuring the eigenvalue register. We define a classical Boolean function $\chi$ on the eigenvalue register satisfying
\begin{align}
  \chi(\lambda)=
            \begin{cases}
            1, & \lambda\leq \gamma;\\
0, &  \lambda>\gamma.\\
            \end{cases}
\end{align}
Based on $\chi$, the corresponding quantum oracle $\textbf{O}_\chi$ can be constructed satisfying
\begin{align}
\textbf{O}_\chi|\lambda_i\rangle|{\bf v}_i\rangle|{\bf v}_i\rangle=(-1)^{\chi(\lambda_i)}|\lambda_i\rangle|{\bf v}_i\rangle|{\bf v}_i\rangle.
\end{align}
Then we apply quantum minimum-finding algorithm on $|\Psi\rangle$ to find the minimum eigenvalue, in which the Grover iteration operator is $(2|\Psi\rangle\langle \Psi|-I)\textbf{O}_\chi$.

Without loss of generality,
we assume that the eigenvalues have been arranged in ascending order, that is, $\lambda_0\leq\lambda_1\leq\cdots\leq\lambda_{M-1}$. Suppose that we have obtained the first $s$ smallest eigenvalues. We modify $\chi$ as
\begin{align}
  \chi(\lambda)=
            \begin{cases}
            1, & \lambda\leq \gamma\ \text{and}\ \lambda\notin \{\lambda_j\}_{j=0}^{s-1} ;\\
0, &  \lambda>\gamma,\\
            \end{cases}
\end{align}
and construct the new oracle $\textbf{O}_\chi$ and the new Grover operator. Based on this, we invoke the quantum minimum-finding algorithm again to find $\lambda_s$. The first $m$ smallest eigenvalues $\{\lambda_j\}_{j=0}^{m-1}$ can be obtained in this way. Note that when we get an eigenvalue $\lambda_j$, we also get the corresponding quantum state $|{\bf v}_j\rangle|{\bf v}_j\rangle$.

\subsubsection{ Construct the compressed digital-encoded state}

Since $y_{ij}=\|{\bf x}_i\|\langle {\bf x}_i|{\bf v}_j\rangle$, we can obtain the value of $y_{ij}$ by computing
the inner product $\langle {\bf x}_i|{\bf v}_j\rangle$, $i=0,1,\cdots, N-1$, $j=0,1,\cdots,m-1$.
Without loss of generality, we assume $\langle {\bf x}_i|{\bf v}_j\rangle\geq0$ because both $|{\bf v}_j\rangle$ and $-|{\bf v}_j\rangle$ are eigenvectors of $e^{-S_2}e^{S_1}$ corresponding to the eigenvalue $\lambda_j$. Then we can get the value of $\langle {\bf x}_i|{\bf v}_j\rangle$ by computing $(\langle {\bf x}_i|{\bf v}_j\rangle)^2$.
An intuitive idea is to use the Hadamard test~\cite{LNN2018}. (See Appendix~\ref{app:Hadamardtest} for the reason why we do not compute $\langle {\bf x}_i|{\bf v}_j\rangle$ directly.) However, it would be exhausting to do so over a large dataset. Instead here we use the inner product estimation in Lemma~\ref{lem:inner} to accomplish this task in parallel.

To compute $(\langle {\bf x}_i|{\bf v}_j\rangle)^2$ by Lemma~\ref{lem:inner}, we need two unitary operations to prepare the states $|\phi_{ij}\rangle =|\lambda_j\rangle|{\bf x}_i\rangle|{\bf x}_i\rangle$ and $|\varphi\rangle=\frac{1}{\sqrt{m}}\sum_{j=0}^{m-1}|\lambda_j\rangle|{\bf v}_j\rangle|{\bf v}_j\rangle$ respectively.
For $|\phi_{ij}\rangle$, the following mapping
 \begin{align}\label{equ:mapping}
|i\rangle\frac{1}{\sqrt{m}}\sum_{j=0}^{m-1}|j\rangle|0\rangle\mapsto|i\rangle\frac{1}{\sqrt{m}}\sum_{j=0}^{m-1}|j\rangle|\phi_{ij}\rangle
\end{align}
is performed by $\textbf{O}_2$ and a sequence of controlled unitary operations $CU(j): |j\rangle|0\rangle\mapsto |j\rangle|\lambda_j\rangle$~\cite{Yu2018QPCA}, $j=0,1,\cdots,m-1$. Each $CU(j)$ can be performed efficiently because we have obtained $\{\lambda_j\}_{j=0}^{m-1}$ in step 2.
For $|\varphi\rangle$, we first prepare the state $|\Psi\rangle$ which can be rewritten as
 \begin{align}
\frac{1}{\sqrt{M}}\bigg(\sum_{j=0}^{m-1}|\lambda_j\rangle|{\bf v}_j\rangle|{\bf v}_j\rangle+\sum_{j=m}^{M-1}|\lambda_j\rangle|{\bf v}_j\rangle|{\bf v}_j\rangle\bigg) .
\end{align}
Then, an oracle $O_\lambda$ is defined to mark the items $\{\lambda_j\}_{j=0}^{m-1}$ (similar to ${\bf O}_\chi$), i.e., $O_\lambda|\lambda_j\rangle|{\bf v}_j\rangle|{\bf v}_j\rangle=-|\lambda_j\rangle|{\bf v}_j\rangle|{\bf v}_j\rangle$ for $j=0,1,\cdots,m-1$. With $O_\lambda$, we can use quantum amplitude amplification~\cite{Brassard2002} to get $|\varphi\rangle$ where the Grover operator is $(2|\Psi\rangle\langle \Psi|-I)O_\lambda$.

With the two unitary operations
$|i\rangle\frac{1}{\sqrt{m}}\sum_{j=0}^{m-1}|j\rangle|0\rangle\mapsto|i\rangle\frac{1}{\sqrt{m}}\sum_{j=0}^{m-1}|j\rangle|\phi_{ij}\rangle$
and
$|i\rangle|j\rangle|0\rangle\mapsto|i\rangle|j\rangle|\varphi\rangle$,
we use Hadamard gates and Lemma~\ref{lem:inner} to produce the state
\begin{align}\label{equ:inner1}
\frac{1}{\sqrt{Nm}}\sum_{i=0}^{N-1}\sum_{j=0}^{m-1}|i\rangle|j\rangle|\frac{(\langle {\bf x}_i|{\bf v}_j\rangle)^2}{\sqrt{m}}\rangle|0\rangle^{\otimes q_2}|0\rangle^{\otimes q_2},
\end{align}
where $q_2$ is the largest number of qubits necessary to store $y_{ij}$ and $\|{\bf x}_i\|$.
Then, we perform ${\bf O}_1$ on the first and fourth registers to obtain
\begin{align}\label{equ:o1xi}
\frac{1}{\sqrt{Nm}}\sum_{i=0}^{N-1}\sum_{j=0}^{m-1}|i\rangle|j\rangle|\frac{(\langle {\bf x}_i|{\bf v}_j\rangle)^2}{\sqrt{m}}\rangle|\|{\bf x}_i\|\rangle|0\rangle.
\end{align}
Next, we perform two Quantum Multiply-Adder (QMA)~\cite{QMA,QMA2} on the third and fourth registers to get
\begin{align}
\frac{1}{\sqrt{Nm}}\sum_{i=0}^{N-1}\sum_{j=0}^{m-1}|i\rangle|j\rangle|\frac{y_{ij}^2}{\sqrt{m}}\rangle|\|{\bf x}_i\|\rangle|0\rangle.
\end{align}
The compressed digital-encoded state $|\psi\rangle$ can be obtained by uncomputing the third and fourth registers after applying $U_f: |x\rangle|0\rangle\mapsto|x\rangle|f(x)\rangle$ on the third and fifth registers, where $f(x)=\sqrt{\sqrt{m}x}$ and $x=\frac{y_{ij}^2}{\sqrt{m}}$.

The entire quantum circuit of step 3 is shown in FIG.~\ref{figure3}.

\begin{figure*}[htbp!]
	
	\begin{minipage}{1\linewidth}
		\vspace{3pt}
		\centerline{\includegraphics[width=0.8\textwidth]{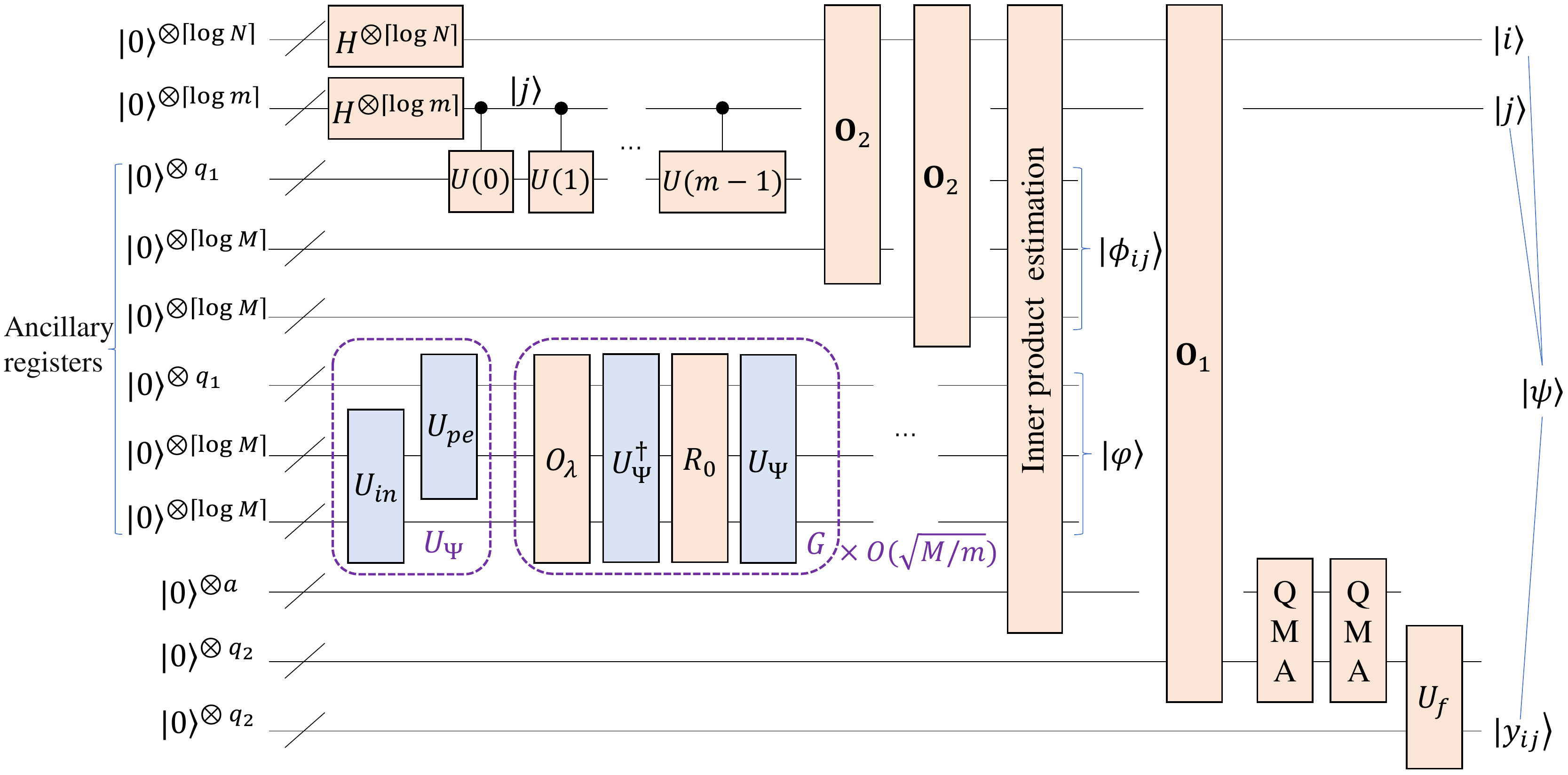}}
	\end{minipage}
	
	\caption{Quantum circuit of step 3 in the QMEDR framework. For simplicity, let $a$ denote the number of qubits required for the inner product estimation, $G$ is the Grover operator in quantum amplitude amplification in which $R_0=2|0\rangle\langle 0|-I$. Note that the ancillary registers are required for inner product estimation and we omit them in Eq.(\ref{equ:inner1}) for convenience. }
	\label{figure3}
\end{figure*}

\subsection{Complexity analysis}\label{sub:ca}

Now we respectively analyze the complexity of each step and discuss the overall complexity.

In step 1, by Lemma~\ref{lem:expH}, the time complexity of implementing the block-encodings of $e^{S_1}$ and $e^{-S_2}$ are $\widetilde{O}(\alpha\kappa_1(a+T_1))$ and $\widetilde{O}(\beta\kappa_2(b+T_2))$ respectively. Then we can implement the block-encoding of $e^{-S_2}e^{S_1}$ in time $\widetilde{O}\big(\max\{\alpha\kappa_1(a+T_1),\beta\kappa_2(b+T_2)\}\big):=T$.

In step 2, for stage (2.1), $O(\lceil\log M\rceil)$ Hadamard gates and CNOT gates are needed to prepare the state in Eq.(\ref{equ:ent}). Next, the quantum phase estimation has a query complexity of $O(\frac{1}{\varepsilon_1}(2+\frac{1}{2\eta}))$ and each query has a time complexity $\widetilde{O}(T+a+b)$, where $\varepsilon_1$ is the error of quantum phase estimation and $1-\eta$ is the probability to succeed. Suppose we wish to approximate $\lambda_j$ to an accuracy $2^{-n}$ with probability of success at least $1-\eta$, we should choose $q_1=n+\lceil\log(2+\frac{1}{\eta})\rceil$~\cite{NC}. For stage (2.2), we should invoke the quantum minimum-finding algorithm for $m$ times to get the first $m$ smallest eigenvalues, and each takes $O(\sqrt{M})$ query complexity.

In step 3, $CU(j)$, $j=0,1,\cdots,m-1$, and two ${\bf O}_2$ are needed to perform the mapping in Eq.(\ref{equ:mapping}). Each $CU(j)$ takes $O(\log(\frac{1}{\varepsilon_1})\log m)$ elementary gates~\cite{NC,Yu2018QPCA}, so $O(m\log(\frac{1}{\varepsilon_1})\log m)$ time is needed to perform all $CU(j)$. Each ${\bf O}_2$ can be done with $\varepsilon_x$-precision in time $O(\text{poly}\log(\frac{NM}{\varepsilon_x}))$. Moreover, the time complexity for preparing the state $|\varphi\rangle$ is $\widetilde{O}(\frac{T+a+b}{\varepsilon_1}\sqrt{\frac{M}{m}})$. Based on the above, the time complexity of producing the state in Eq.(\ref{equ:inner1}) is $\widetilde{O}(\frac{T+a+b}{\varepsilon_1\varepsilon_2}\sqrt{\frac{M}{m}})$, where $\varepsilon_2$ is the error of the inner product estimation. Then, we use ${\bf O}_1$ to get the state in Eq.(\ref{equ:o1xi}) in time $O(\text{poly}\log (NM))$. The time complexity of QMA is $O(\text{poly}\log(\frac{1}{\varepsilon_{Q}}))$ with accuracy $\varepsilon_{Q}$, which can be neglected compared with other subroutines. Moreover, we omit the complexity of $U_f$.

The complexity of each step of the QMEDR framework is summarized as TABLE~\ref{tab:stc}. Furthermore, if every $\lambda_j=O(\frac{1}{m})$, $\varepsilon_1$ should take $O(\frac{1}{m})$ and thus the time complexity of step 2 is $\widetilde{O}((T+a+b)m^2\sqrt{M})$. In step 3, the error of $\frac{(\langle {\bf x}_i|{\bf v}_j\rangle)^2}{\sqrt{m}}$ is $\varepsilon_2$, this will make the error of $y_{ij}^2$ equal to $\sqrt{m}\|{\bf x}_i\|^2\varepsilon_2$. Suppose $y_{ij}=O(1)$, to ensure the final error of $y_{ij}$ is within $\epsilon$, we should take $\varepsilon_2=O(\frac{\epsilon}{\sqrt{m}\max_i\|{\bf x}_i\|^2})$. Therefore, the total time complexity of step 3 is $\widetilde{O}(\frac{(T+a+b)\max_i\|{\bf x}_i\|^2m\sqrt{M}}{\epsilon})$.

\begin{table}
\caption{\label{tab:stc}
The time complexity of each step of the QMEDR framework. Here $\varepsilon_1$ is the error of quantum phase estimation, $\varepsilon_2$ is the error of inner product estimation.}
\begin{ruledtabular}
  \begin{tabular}{cc}
         \specialrule{0em}{1pt}{1pt}
Step  &Time complexity \\ \hline
       \specialrule{0em}{1pt}{1pt}
   step 1&  $T=\widetilde{O}\big(\max\{\alpha\kappa_1(a+T_1),\beta\kappa_2(b+T_2)\}\big)$\\
          \specialrule{0em}{1pt}{1pt}
   step 2&  $\widetilde{O}(\frac{(T+a+b)m\sqrt{M}}{\varepsilon_1})$\\
          \specialrule{0em}{1pt}{1pt}
   step 3&  $\widetilde{O}(\frac{T+a+b}{\varepsilon_1\varepsilon_2}\sqrt{\frac{M}{m}})$\\
          \specialrule{0em}{1pt}{1pt}
  \end{tabular}
  \end{ruledtabular}
\end{table}

As a conclusion, the QMEDR framework can output the compressed digital-encoded state $|\psi\rangle$ in time $\widetilde{O}(\frac{(T+a+b)\max_i\|{\bf x}_i\|^2m\sqrt{M}}{\epsilon})$, where $T=\widetilde{O}\big(\max\{\alpha\kappa_1(a+T_1),\beta\kappa_2(b+T_2)\}\big)$ and $\epsilon$ is the error of $y_{ij}$.

\section{Applications}\label{Sec:exm}

In this section, we will use the QMEDR framework to accelerate the classical ELPP~\cite{ELPP}, EUDP~\cite{UDP,GME2014}, ENPE~\cite{ENPE}, and EDA~\cite{EDA} algorithms which all belong to the MEDR framework. The core is to implement the block-encodings of $S_1$ and $S_2$ of these algorithms. Once the two block-encodings are implemented efficiently, we can design the corresponding quantum algorithms by Theorem~\ref{QMEDR}.

(1) Quantum ELPP algorithm

In ELPP, $S_1=X^TLX$ and $S_2=X^TDX$ where $L=D-S$, $D$ is a diagonal matrix, $D_{ii}=\sum_{j\neq i}S_{ij}$, and the similarity matrix $S$ is defined as
\begin{align}\label{equ:sij}
S_{ij}=
            \begin{cases}
            e^{-\frac{\|{\bf x}_i-{\bf x}_j\|^2}{2\sigma^2}}, & {\bf x}_i\in N_k({\bf x}_j)\  \text{or}\ {\bf x}_j\in N_k({\bf x}_i);\\
0, & \text{otherwise},\\
            \end{cases}
\end{align}
where $\sigma$ is a parameter that is determined empirically and $N_k({\bf x}_j)$ denote the set of the $k$ nearest neighbors of ${\bf x}_j$.

To construct the block-encodings of $S_1$ and $S_2$, we should be able to compute the matrices $S$, $D$, and $L$. We first use the quantum $k$ nearest neighbors algorithm, a straightforward generalization of quantum nearest neighbor classification in~\cite{WiebeKS15}, to obtain the $k$ nearest neighbors of each sample, which takes $\widetilde{O}(kN\sqrt{N})$ time. Then the matrices $S$, $D$, and $L$ can be computed in classical in time $O(kN)$.

For $S_1$, we can implement the block-encodings of $X$ and $L$, then implement its block-encoding by the product of block-encoded matrices~\cite{GSL}. Since $X$ is stored in a structured QRAM~\cite{QRAMDS1,QRAMDS2}, an $(\|X\|_F,\lceil\log(N+M)\rceil,\varepsilon_x)$-block-encoding of $X$ can be implemented in time $O(\text{poly}\log(\frac{NM}{\varepsilon_x}))$ by Lemma 6 in~\cite{ICALP2019}. Since $L$ is a $(k+1)$-sparse matrix, we can implement its a $(k+1,\text{poly}\log(\frac{N}{\varepsilon_l}),\varepsilon_l)$-block-encoding with $O(1)$ queries for the sparse-access oracles and $O(\text{poly}\log(\frac{N}{\varepsilon_l}))$ elementary gates by Lemma 7 in~\cite{ICALP2019}. Then, we implement an $((k+1)\|X\|_F^2,2\lceil\log(N+M)\rceil+\text{poly}\log(\frac{N}{\varepsilon_l}),\|X\|_F^2\varepsilon_l+2(k+1)\|X\|_F\varepsilon_x)$-block-encoding of $S_1$ with complexity $O(\text{poly}\log(\frac{NM}{\varepsilon}))$ where $\varepsilon=\min\{\varepsilon_l,\varepsilon_x\}$~\cite{GSL}.

For the semipositive definite matrix $S_2=\sum_iD_{ii}{\bf x}_i{\bf x}_i^T$, we can construct its block-encoding by preparing the purified density operator $\frac{S_2}{\text{tr}(S_2)} :=\rho$~\cite{GSL}. We first store the vector ${\bf d}=[d_0,d_1,\cdots,d_{N-1}]^T$, $d_i=\sqrt{D_{ii}}$, in a structured QRAM~\cite{QRAMDS1,QRAMDS2}. Note that the time and space complexity of storing ${\bf d}$ are $O(N\log^2N)$ and $O(kN\log^2(kN))$ respectively. Then there exists a quantum algorithm that can perform the mapping $U_d: |i\rangle|0\rangle\mapsto |i\rangle|d_i\rangle$
in time $O(\log N)$. With $U_d$, ${\bf O}_2$ and ${\bf O}_3$, we prepare the state $\frac{\sum_id_i\|{\bf x}_i\|}{\sqrt{\sum_iD_{ii}\|{\bf x}_i\|^2}}|i\rangle|{\bf x}_i\rangle$ whose partial trace over the first register is $\rho$. Then an $(1,O(\log N),\varepsilon_2)$-block-encoding of $\rho$ can be constructed by Lemma 25 in~\cite{GSL}, where $\varepsilon_2$ comes from the unitary operation for preparing the above state. Then we implement a $(\text{tr}(S_2),O(\log N),\varepsilon_2)$-block-encoding of $S_2$ in time $O(\text{poly}\log (\frac{NM}{\varepsilon_x}))$~\cite{Takahira2021}, where the value of $\text{tr}(S_2)$ can be computed in time $O(N)$ and $\text{tr}(S_2)=O(\|X\|_F^2)$.

The quantum ELLP algorithm can then be achieved by Theorem~\ref{QMEDR}.

(2) Quantum EUDP algorithm

In EUDP, $S_1=X^TLX$, $S_2=X^TL'X$ where $L^{'}=D^{'}-S^{'}$, $D^{'}$ is a diagonal matrix, $D^{'}_{ii}=\sum_jS^{'}_{ij}$, and $S^{'}_{ij}=1-S_{ij}$. The block-encoding of $S_1$ is the same as ELPP's. For $S_2$, we first compute the matrix $L'$ in time $O(N^2)$ and store it in a structured QRAM~\cite{QRAMDS1,QRAMDS2}. The space and time complexity to construct the data structure are $O(N^2\log^2N)$. Then we implement an $(\|L'\|_F,\lceil\log(2N)\rceil,\varepsilon_{l'})$ block-encoding of $L'$ and further implement an $(\|X\|_F^2\|L'\|_F,2\lceil\log(N+M)\rceil+\lceil\log(2N)\rceil,\|X\|_F^2\varepsilon_{l'}+2\|X\|_F\|L'\|_F\varepsilon_x)$-block-encoding of $S_2$ in time $O(\text{poly}\log(\frac{NM}{\varepsilon_x}))$ by the product of block-encoded matrices~\cite{GSL}. Then we invoke Theorem~\ref{QMEDR} to obtain the quantum ELLP algorithm.

(3) Quantum ENPE algorithm

In ENPE, $S_1=X^TWX$ and $S_2=X^TX$ where $W=\arg\min\sum_i\|{\bf x}_i-\sum_{{\bf x}_j\in Q({\bf x}_i)}W_{ij}{\bf x}_j\|$ and $\sum_{{\bf x}_j\in Q({\bf x}_i)}W_{ij}=1$. Here we use the $\varepsilon$-neighborhoods criteria to get the nearest neighbors set $Q({\bf x}_i)$ of ${\bf x}_i$, in which each set have $\Theta(k)$ samples. For $S_1$, we first use the method of the quantum NPE algorithm~\cite{QNPE} to get the classical information of the matrix $W$, which can be done in time $\widetilde{O}(N)$. Since $W$ is a matrix of $\Theta(k)$ nonzero elements in each row and column, we assume that it is a $k$-sparse matrix. We can implement a $(k,\text{poly}\log(\frac{N}{\varepsilon_w}),\varepsilon_w)$-block-encoding of $W$ by Lemma 7 in~\cite{ICALP2019} and further implement an $(\|X\|_F^2k,2\lceil\log(N+M)\rceil+\text{poly}\log(\frac{N}{\varepsilon_w}),\|X\|_F^2\varepsilon_w+2\|X\|_Fk\varepsilon_x)$-block-encoding of $S_1$ with complexity $O(\text{poly}\log(\frac{NM}{\varepsilon}))$ where $\varepsilon=\min\{\varepsilon_w,\varepsilon_x\}$~\cite{GSL}. For $S_2$, we can implement an $(\|X\|_F^2,2\lceil\log(N+M)\rceil,2\|X\|_F\varepsilon_x)$-block-encoding of it in time $O(\text{poly}\log(\frac{NM}{\varepsilon_x}))$~\cite{GSL}. The quantum ENPE algorithm is obtained by Theorem~\ref{QMEDR}.

(4) Quantum EDA algorithm

In EDA, $S_1$ and $S_2$ are the between-class scatter matrix and the within-class scatter matrix respectively~\cite{EDA}. According to the quantum LDA algorithm~\cite{QLDA}, we can first construct the density operators corresponding to $S_1$ and $S_2$ in time $O(\log(NM))$. Then an $(1,O(\log M),\varepsilon_1)$-block-encoding of $S_1$ and an $(1,O(\log M),\varepsilon_2)$-block-encoding of $S_2$ are implemented in time $O(\log(NM))$, where $\varepsilon_1$ and $\varepsilon_2$ come from the unitary operations for preparing these two density operators corresponding to $S_1$ and $S_2$. The quantum EDA algorithm can then be achieved by Theorem~\ref{QMEDR}. Note that in quantum EDA, the first $m$ principal eigenvectors are the eigenvectors corresponding to the first $m$ largest eigenvalues of $e^{-S_2}e^{S_1}$. We can replace the quantum minimum-finding algorithm in the step 2 with the quantum maximum-finding algorithm~\cite{1999max}.

\begin{table*}[!htb]
\caption{Complexity comparisons between the classical ELPP, EUDP, ENPE, EDA algorithms and their quantum versions. }\label{tab:comp}
\begin{ruledtabular}
\begin{tabular}{ccccc}
\specialrule{0em}{1pt}{1pt}
Algorithm  &$S_1$ & $S_2$ & Classical complexity & $^b$Quantum complexity \\  \specialrule{0em}{1pt}{1pt} \hline
       \specialrule{0em}{1pt}{1pt}
  ELPP~\cite{ELPP}   &$X^TLX$ & $X^TDX$ & $O(MN^2+M^3)$& $\widetilde{O}(N^{\frac{3}{2}}+T\eta M^{\frac{1}{2}})$, $T\!=\!\widetilde{O}(\max\{\|X\|_F^2\kappa_1, \|X\|_F^2\kappa_2\})$\\
    \specialrule{0em}{1pt}{1pt}
  EUDP~\cite{UDP,GME2014}   &$X^TLX$ & $X^TL^{'}X$ &$O(MN^2+M^3)$ & $\widetilde{O}(N^2+T\eta M^{\frac{1}{2}})$, $T\!=\!\widetilde{O}(\max\{\|X\|_F^2\kappa_1, \|X\|_F^2\|L'\|_F\kappa_2\})$ \\
    \specialrule{0em}{1pt}{1pt}
  ENPE~\cite{ENPE}   &$X^TWX$& $X^TX$ &$O(k^3NM+M^3)$, $k\ll N$~\cite{QNPE} & $\widetilde{O}(kN+T\eta M^{\frac{1}{2}})$, $T\!=\!\widetilde{O}(\max\{\|X\|_F^2\kappa_1,\|X\|_F^2\kappa_2\})$ \\
    \specialrule{0em}{1pt}{1pt}
    $^a$EDA~\cite{EDA}   & $S_b$& $S_w$ &$O(MN^2+N^3)$& $\widetilde{O}(\max\{\kappa_1,\kappa_2\}\eta M^{\frac{1}{2}})$\\
    \specialrule{0em}{1pt}{1pt}
\end{tabular}
\end{ruledtabular}
$^a$In EDA, $S_b$ and $S_w$ are the the between-class scatter matrix and the within-class scatter matrix respectively.
$^b$Here $\kappa_1$ and $\kappa_2$ correspond with $S_1$ and $S_2$ in different algorithms and $\eta=O(\sqrt{m}\max_i\!\|{\bf x}_i\|^2)$. For convenience, we delete the factor $k$ in ELPP and EUDP and let $1/\epsilon, m =O(\text{poly}\log(NM))$ in Theorem~\ref{QMEDR}.
\end{table*}

The complexity comparisons between the classical ELPP, EUDP, ENPE, EDA algorithms and their quantum versions are summarized in TABLE.~\ref{tab:comp}. When $\kappa_1,\kappa_2,\|{\bf x}_i\|=O(\text{poly}\log (NM)$), the results show that the quantum ELPP and quantum NPE algorithms achieve polynomial speedups both in $N$ and $M$, the quantum EUDP algorithm achieve a polynomial speedup in $M$, the quantum EDA algorithm provide an exponential speedup on $N$ and a polynomial speedup on $M$ over their classical counterparts.

\section{Discussion}\label{Sec:dis}

\begin{table*}
\caption{\label{tab1:com}
Comparisons between the QMEDR framework and the existing quantum linear DR algorithms in an end-to-end setting. }
\begin{ruledtabular}
  \begin{tabular}{ccccc}
  \specialrule{0em}{1pt}{1pt}
\multirow{2}{*}{Algorithm}  & \multirow{2}{*}{Input} & &$^a$Output&   \\ \cline{3-5}
\specialrule{0em}{1pt}{1pt}
  & &type I &type II & type III \\
       \specialrule{0em}{1pt}{1pt}\hline
       \specialrule{0em}{1pt}{1pt}
   quantum PCA~\cite{QPCA}&multiple copies of the density operators, $m$ &  $\surd$&& \\
   \specialrule{0em}{1pt}{1pt}
   quantum PCA~\cite{Yu2018QPCA} & $X$ stored in a structured QRAM, $m$ & &$\surd$&   \\
   \specialrule{0em}{1pt}{1pt}
   quantum LDA~\cite{QLDA}     &$X$ stored in a structured QRAM, $m$ & $\surd$ &&    \\
   \specialrule{0em}{1pt}{1pt}
   quantum LDA~\cite{YU2023QLDA}     &$X$ stored in a structured QRAM, $m$ &  &$\surd$ &   \\
   \specialrule{0em}{1pt}{1pt}
   quantum LPP~\cite{QLPP}     &$X$ stored in a structured QRAM, $m$ &$\surd$  & &   \\
   \specialrule{0em}{1pt}{1pt}
   $^b$quantum AOP~\cite{QAOPDuan,QAOP}     & --- &$\surd$ & &  \\
   \specialrule{0em}{1pt}{1pt}
      $^c$quantum NPE~\cite{QNPEliang}   & oracles preparing quantum states $\{|{\bf x}_i\rangle\}_{i=0}^{N-1}$, $m$  & $\surd$ & &   \\
    \specialrule{0em}{1pt}{1pt}
   quantum NPE~\cite{QNPE}     & $X$ stored in a structured QRAM, $m$ & $\surd$ & &   \\
    \specialrule{0em}{1pt}{1pt}
    quantum DCCA~\cite{liDCCA}     & $X$ stored in a QRAM~\cite{QRAM}, $m$ &$\surd$  & &
   \\  \specialrule{0em}{1pt}{1pt}
  The QMEDR framework    &$X$ stored in a structured QRAM, $m$ &   &$\surd$ & $\surd$ \\
   \specialrule{0em}{1pt}{1pt}
  \end{tabular}
  \end{ruledtabular}
  $^a$Here ``type I" denotes that the output is quantum states corresponding to the column vectors of transformation matrix, ``type II" denotes that the output is compressed analog-encoded state, and ``type III" denotes that the output is compressed digital-encoded state. $^b$The quantum AOP algorithms output quantum superposition states corresponding to transformation matrices, here we classify them as ``type I" for convenience. $^c$Although two methods of getting compressed data are given in this algorithm, no superposition compressed state is constructed, so it is classified as ``type I".
\end{table*}

One core of this framework is to construct the block-encoding of the matrix exponential. With it, we can easily combine the block-encoding technique with the quantum phase estimation to reveal the eigenvectors and eigenvalues of $e^{-S_2}e^{S_1}$. The block-encoding framework here we used is a useful tool, which can be applied to algorithms for various problems, such as Hamiltonian simulation and density matrix preparation. By using it, one can significantly improve the existing quantum DR algorithms, such as quantum LPP~\cite{QLPP} and quantum LDA~\cite{QLDA}, and further reduce the dependence of their complexity on error. Moreover, it is useful for constructing the density matrix corresponding to the matrix chain product in form $(A_l\cdots A_2A_1)(A_l\cdots A_2A_1)^\dag$ which can be seen as a special simplified version of Hermitian chain product in~\cite{QLDA}, but here the matrix $A_i$ is not limited to a Hermitian matrix. For the general Hermitian chain product in form $[f_l(A_l)\cdots f_2(A_2)f_1(A_1)][f_l(A_l)\cdots f_2(A_2)f_1(A_1)]^\dag$, the role of block-encoding remains to be explored.

The other core of this framework is to construct the compressed digital-encoded state which can be utilized as input for QML tasks to overcome the curse of dimensionality. For example, quantum $k$-medoids algorithm~\cite{lym} has a time complexity $\widetilde{O}(N^{\frac{1}{2}}M^2)$ for one iteration and is therefore not suitable for dealing with high-dimensional data. One can select a suitable quantum DR algorithm as a preprocessing subroutine to produce the compressed digital-encoded state and input it to the quantum $k$-medoids algorithm. This leads to a better dependence on $M$ in the algorithm complexity.
While the digital-encoded state can be used as input for QML, the analog-encoding is sometimes required, such as quantum support vector machine~\cite{QSVM} and quantum $k$-means clustering~\cite{qmeans2019}. Inspired by~\cite{Yu2018QPCA}, we present the method for constructing the compressed analog-encoded state $\frac{1}{\|Y\|_F}\sum_{i=0}^{N-1}\sum_{j=0}^{m-1}y_{ij}|i\rangle|j\rangle$ in Appendix~\ref{app:analog}. Therefore, our QMEDR framework can output the compressed quantum states in two forms, which can be flexible utilized for various QML tasks.

Now, we divide the quantum linear DR algorithms into three types: (I) output the quantum states corresponding to the column vectors of the transformation matrix; (II) output the compressed analog-encoded state; (III) output the digital-encoded state. The algorithms in type I have not provided the desired quantum data compression, namely obtaining its corresponding low-dimensional dataset. The algorithms in types II and III are well adapted directly to other QML algorithms. The comparisons between the QMEDR framework and the existing quantum linear DR algorithms in an end-to-end setting is summarized as TABLE.~\ref{tab1:com}.

Although the framework is designed for MEDR, with minor modifications it can be used in a variety of linear DR algorithms. For example, for a linear DR algorithm that is related to solving the eigenproblem $A{\bf v}=\lambda {\bf v}$, one can design the block-encoding of $A$ and extract the principal eigenvalues, then step 3 and Appendix~\ref{app:analog} can be used to get the compressed states. Furthermore, for a linear DR algorithm that is related to solving the generalized eigenproblem $A{\bf v}=\lambda B {\bf v}$, one can transform it into the eigenproblem $B^{-\frac{1}{2}}AB^{-\frac{1}{2}}{\bf w}=\lambda {\bf w}$, ${\bf w}=B^{\frac{1}{2}}{\bf v}$~\cite{liDCCA} or the eigenproblem $A^{\frac{1}{2}}B^{-1}A^{\frac{1}{2}}{\bf w}=\lambda {\bf w}$, ${\bf w}=A^{\frac{1}{2}}{\bf v}$~\cite{QLDA}. But in these cases, extra work may be needed, which is worth further exploration.

\section{Conclusion}\label{Sec:con}

In this paper, we proposed the QMEDR framework, which is configurable and a series of new quantum DR algorithms can be derived from this framework. The applications on ELPP, EUDP, ENPE, and EDA showed the quantum superiority of this framework.
The techniques we presented in this paper can be extended to solve many important computational problems, such as computing the matrix exponential and its inverse, simulating the matrix exponential, and performing singular value decomposition (see Appendix~\ref{app:notH2}). Moreover, the QMEDR framework can also be regarded as a common framework to explore the quantization of other linear DR techniques. This work  builds a bridge between quantum linear DR algorithms and other QML algorithms, which is helpful to overcome the curse of dimensionality and solve problems of practical importance. We hope it can inspire the study of QML.

\section*{Acknowledgements}

This work is supported by Beijing Natural Science Foundation (Grant No. 4222031) and National Natural Science Foundation of China (Grant Nos. 61976024, 61972048, 62171056).



\appendix

\section{Proof of Lemma~\ref{lem:expH}}
\label{app:proofexpH}

To prove Lemma~\ref{lem:expH}, we will use the following tools.

\begin{lemma}\label{pro1}
(Block-encoding of controlled-Hamiltonian simulation~\cite{CGJ}).
Let $\mathcal{M}=2^J$ for some $J\in \mathbb{N}$, $\gamma\in \mathbb{R}$ and $\epsilon\geq 0$. Suppose that $U$ is an $(\alpha,a,\frac{\varepsilon}{2(J+1)^2\mathcal{M}\gamma})$-block-encoding of the Hamiltonian $H$.
Then we can implement a $(1,a+2,\varepsilon)$-block-encoding of a controlled $(\mathcal{M},\gamma)$-simulation of the Hamiltonian $H$, with $O(|\alpha\mathcal{M}\gamma|+J\frac{\log (J/\varepsilon)}{\log\log(J/\varepsilon)})$ uses of controlled-$U$ or its inverse and with $O(a|\alpha\mathcal{M}\gamma|+aJ\frac{\log (J/\varepsilon)}{\log\log(J/\varepsilon)})$ three-qubit gates.
\end{lemma}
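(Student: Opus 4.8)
The plan is to realize the controlled $(\mathcal{M},\gamma)$-simulation, i.e.\ the operator $W=\sum_{m=0}^{\mathcal{M}-1}|m\rangle\langle m|\otimes e^{\imath Hm\gamma}$ on the $J$-qubit clock register tensored with the system, by exploiting the binary expansion of the clock value. Writing $m=\sum_{k=0}^{J-1}2^{k}m_{k}$ with $m_{k}\in\{0,1\}$ gives $e^{\imath Hm\gamma}=\prod_{k=0}^{J-1}\big(e^{\imath H2^{k}\gamma}\big)^{m_{k}}$, so that $W=\prod_{k=0}^{J-1}C_{k}$, where $C_{k}$ applies $e^{\imath H2^{k}\gamma}$ to the system conditioned on the $k$-th clock qubit. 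First I would, for each $k$, invoke Theorem~\ref{the:OBC} with $t=2^{k}\gamma$ to build a $(1,a+2,\epsilon_{k})$-block-encoding of $e^{\imath H2^{k}\gamma}$; this needs $U$ to be an $(\alpha,a,\frac{\epsilon_{k}}{2\cdot2^{k}\gamma})$-block-encoding of $H$ and costs $O\big(\alpha2^{k}\gamma+\frac{\log(1/\epsilon_{k})}{\log\log(1/\epsilon_{k})}\big)$ uses of controlled-$U$. Controlling this circuit on the $k$-th clock qubit turns it into a block-encoding $\tilde{C}_{k}$ of $C_{k}$ with the same ancilla count and accuracy.

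I would then compose the $\tilde{C}_{k}$ in sequence. The point that keeps the ancilla count at $a+2$ instead of letting it grow to $J(a+2)$ is that each $e^{\imath H2^{k}\gamma}$ is itself unitary, so a $(1,a+2,\epsilon_{k})$-block-encoding of it approximately returns the $a+2$ ancilla qubits to $|0\rangle$ on the system branch; the same ancilla register can therefore be reused at every level, with the errors accumulating additively to yield a $(1,a+2,\sum_{k}\epsilon_{k})$-block-encoding of $W$. Taking $\epsilon_{k}$ of order $\varepsilon/J$ makes $\sum_{k}\epsilon_{k}\le\varepsilon$, while the strictest input-precision demand, occurring at $k=J-1$ and equal to $\frac{\epsilon_{J-1}}{2^{J}\gamma}\sim\frac{\varepsilon}{J\mathcal{M}\gamma}$, is comfortably met by the hypothesis that $U$ is an $(\alpha,a,\frac{\varepsilon}{2(J+1)^{2}\mathcal{M}\gamma})$-block-encoding; the extra $(J+1)^{2}$ factor supplies the slack that absorbs the per-level simulation error and the ancilla-leakage overhead.

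For the cost I would sum the per-level figures: $\sum_{k=0}^{J-1}O(\alpha2^{k}\gamma)$ is a geometric series telescoping to $O(\alpha\mathcal{M}\gamma)$, and the $J$ logarithmic terms, each of size $O\big(\frac{\log(J/\varepsilon)}{\log\log(J/\varepsilon)}\big)$, give $O\big(J\frac{\log(J/\varepsilon)}{\log\log(J/\varepsilon)}\big)$ uses of controlled-$U$ in total. The two-qubit gate bound of Theorem~\ref{the:OBC} carries the extra factor $a$, and the single clock-qubit control promotes the relevant two-qubit gates to three-qubit gates, reproducing the claimed $O\big(a|\alpha\mathcal{M}\gamma|+aJ\frac{\log(J/\varepsilon)}{\log\log(J/\varepsilon)}\big)$ gate count.

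The hard part will be the error analysis for the shared ancilla: one must prove rigorously that composing block-encodings of unitaries on a reused ancilla register incurs only additive error, which amounts to bounding, at each level, the amplitude that leaks out of the all-zeros ancilla subspace and tracking how these leakages propagate through the product. Making this propagation bound, together with the accumulated simulation errors, fit under $\varepsilon$ with the precise constant $\frac{1}{2(J+1)^{2}\mathcal{M}\gamma}$ is where the bookkeeping is most delicate; everything else reduces to the geometric-series and logarithmic-sum estimates above.
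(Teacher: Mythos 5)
First, note that the paper does not actually prove this lemma: it is imported verbatim from Ref.~\cite{CGJ}, so there is no in-paper argument to compare against. Your overall plan is the same one used there — expand the clock value in binary and realize the controlled simulation as a product of singly-controlled $e^{\imath 2^{k}\gamma H}$, each obtained from Theorem~\ref{the:OBC}, with the query count coming from a geometric series. Two corrections, one minor and one substantive. The minor one: the paper defines the controlled $(\mathcal{M},\gamma)$-simulation as $\sum_{m=-\mathcal{M}}^{\mathcal{M}-1}|m\rangle\langle m|\otimes e^{\imath m\gamma H}$ over a \emph{signed} $(J+1)$-bit register, not $\sum_{m=0}^{\mathcal{M}-1}$; you need one additional factor $e^{-\imath 2^{J}\gamma H}$ controlled on the sign bit, and this factor alone already contributes the dominant $O(\alpha\mathcal{M}\gamma)$ cost.

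The substantive gap is exactly the step you defer: the error of a product of block-encodings of unitaries on a \emph{shared} ancilla register is not additive. If $V_k$ is a $(1,a+2,\epsilon_k)$-block-encoding of the unitary $W_k$, write $V_k(|0\rangle\otimes I)=(|0\rangle\otimes I)\tilde{W}_k+E_k$ with $(\langle 0|\otimes I)E_k=0$; unitarity of $V_k$ and $\|\tilde{W}_k-W_k\|\le\epsilon_k$ give only $\|E_k\|\le\sqrt{2\epsilon_k-\epsilon_k^2}$, i.e.\ the leakage scales as $\sqrt{\epsilon_k}$. For two factors, $(\langle 0|\otimes I)V_2V_1(|0\rangle\otimes I)=\tilde{W}_2\tilde{W}_1+E_2'^{\dagger}E_1$ with $E_2'$ the leakage of $V_2^{\dagger}$, so the error is bounded by $\epsilon_1+\epsilon_2+2\sqrt{\epsilon_1\epsilon_2}=(\sqrt{\epsilon_1}+\sqrt{\epsilon_2})^2$, and by induction $J+1$ factors give $\big(\sum_k\sqrt{\epsilon_k}\big)^2$ rather than $\sum_k\epsilon_k$. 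With your choice $\epsilon_k\sim\varepsilon/J$ this evaluates to $\Theta(J\varepsilon)$, so your error budget does not close, and a polynomial-in-$J$ slack factor cannot ``absorb'' a square-root loss in the way you suggest. The correct calibration is $\epsilon_k=\varepsilon/(J+1)^2$, which makes $\big(\sum_k\sqrt{\epsilon_k}\big)^2\le\varepsilon$; feeding this target into Theorem~\ref{the:OBC} with $|t|\le\mathcal{M}\gamma$ demands precisely that $U$ be an $\big(\alpha,a,\frac{\varepsilon}{2(J+1)^2\mathcal{M}\gamma}\big)$-block-encoding of $H$ — this is where the $(J+1)^2$ in the hypothesis comes from, and it leaves the stated query and gate counts unchanged since $\log\big((J+1)^2/\varepsilon\big)=O(\log(J/\varepsilon))$. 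Supplying this composition bound is the actual content of the proof; the rest of your argument is fine.
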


Note that here the controlled $(\mathcal{M},\gamma)$-simulation of the Hamiltonian $H$ is defined as a unitary
\begin{align}
W :=\sum_{m=-\mathcal{M}}^{\mathcal{M}-1}|m\rangle\langle m|\otimes e^{im\gamma H},
\end{align}
where $|m\rangle$ denotes a (signed) bit string $|b_Jb_{J-1}\cdots b_0\rangle$ such that $m=-b_J2^J+\sum_{j=0}^{J-1}b_j2^J$.

\begin{theorem}\label{pro2}
(Implementing a smooth function of a Hamiltonian~\cite{CGJ}).
Let $x_0\in \mathbb{R}$ and $r>0$ be such that $f(x_0+x)=\sum_{l=0}^{\infty}a_lx^l$ for all $x\in[-r,r]$. Suppose that $B>0$ and $\delta\in(0,r]$ are such that $\sum_{l=0}^{\infty}(r+\delta)^l|a_l|\leq B$. If $\|H-x_0I\|\leq r$ and $\epsilon\in(0,\frac{1}{2}]$, then we can implement a unitary $\tilde{U}$ that is a $(B,b+O(\log(\frac{r\log(1/\epsilon)}{\delta})),B\epsilon)$-block-encoding of $f(H)$, with a single use of a circuit $V$ which is a $(1,b,\frac{\epsilon}{2})$-block-encoding of controlled $(O(\frac{r\log(1/\epsilon)}{\delta}),O(\frac{1}{r}))$-simulation of
$H$, and with using $O(\frac{r}{\delta}\log(\frac{r}{\delta\epsilon})\log(\frac{1}{\epsilon}))$ two-qubit gates.
\end{theorem}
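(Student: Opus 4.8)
The plan is to recognize $e^{cH}$ as a smooth (indeed entire) function $f(H)$ with $f(x)=e^{cx}$ and feed $f$ into Theorem~\ref{pro2}, which manufactures a block-encoding of $f(H)$ out of a block-encoding of a controlled Hamiltonian simulation of $H$. That auxiliary simulation circuit is exactly what Lemma~\ref{pro1} produces from the given block-encoding $U$ of $H$. So the proof is a two-layer composition: Lemma~\ref{pro1} turns $U$ into the simulation block-encoding $V$ that Theorem~\ref{pro2} requires, and Theorem~\ref{pro2} turns $V$ into the desired block-encoding $\tilde{U}$ of $e^{cH}$. The whole argument reduces to choosing the expansion data $(x_0,r)$ and the radius-slack parameter for $f$ so that the normalization comes out as $e^2$ and the precision budgets chain together to the stated bound on $\delta$.

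For the parameters, I would expand $e^{cx}$ about $x_0=1$. Since $\frac{I}{\kappa}\preceq H\preceq I$, the operator $H-I$ has spectrum in $[\frac{1}{\kappa}-1,0]$, so $\|H-I\|=1-\frac{1}{\kappa}$, and I take $r=1-\frac{1}{\kappa}$, the tightest radius meeting the hypothesis $\|H-x_0I\|\le r$ of Theorem~\ref{pro2}. The Taylor coefficients are $a_l=e^{c}c^l/l!$, so for any slack $\delta_0$ one has $\sum_{l=0}^{\infty}(r+\delta_0)^l|a_l|=e^{c}\,e^{r+\delta_0}$. I would set $\delta_0=\frac{1}{\kappa}$; the hypothesis $\kappa\ge2$ is used precisely here, since it guarantees $\delta_0=\frac{1}{\kappa}\le 1-\frac{1}{\kappa}=r$ as required by Theorem~\ref{pro2}. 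With this choice $r+\delta_0=1$, so the convergence bound is $B=e^{c+1}$, namely $B=e^2$ for $c=1$ and $B=1$ for $c=-1$. Theorem~\ref{pro2} then yields a $(B,\,b+O(\log\frac{r\log(1/\epsilon)}{\delta_0}),\,B\epsilon)$-block-encoding of $e^{cH}$; since $\frac{r\log(1/\epsilon)}{\delta_0}=O(\kappa\log\frac{1}{\epsilon})$, the ancilla overhead is $O(\log(\kappa\log\frac{1}{\epsilon}))$. For $c=-1$ I would pad the $B=1$ encoding up to subnormalization $e^2$ with one extra ancilla and an $O(1)$-gate rotation, so the statement reads uniformly as an $(e^2,\,a+O(\log(\kappa\log\frac{1}{\epsilon})),\,e^2\epsilon)$-block-encoding for both signs.

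The precision bookkeeping is where I would spend the most care. Theorem~\ref{pro2} consumes a single copy of a circuit $V$ that is a $(1,b,\frac{\epsilon}{2})$-block-encoding of a controlled $(\mathcal{M},\gamma)$-simulation of $H$ with $\mathcal{M}=O(\frac{r\log(1/\epsilon)}{\delta_0})=O(\kappa\log\frac{1}{\epsilon})$ and $\gamma=O(\frac{1}{r})=O(1)$ (here $r\ge\frac{1}{2}$ since $\kappa\ge2$). I would build $V$ from $U$ by Lemma~\ref{pro1} with $J=\log\mathcal{M}=O(\log(\kappa\log\frac{1}{\epsilon}))$ and $b=a+2$; that lemma requires $U$ to be a block-encoding of precision $\frac{\epsilon/2}{2(J+1)^2\mathcal{M}\gamma}$. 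Substituting the values gives a requirement of order $\frac{\epsilon}{\kappa\log(1/\epsilon)\log^2(\kappa\log(1/\epsilon))}$, which is exactly the hypothesis $\delta\le\omega\epsilon/\big(\kappa\log(\frac{1}{\epsilon})\log^2(\kappa\log(\frac{1}{\epsilon}))\big)$ for a suitable constant $\omega$. The cost is then read off additively: Lemma~\ref{pro1} spends $O(\alpha\mathcal{M}\gamma+J\frac{\log(J/\epsilon)}{\log\log(J/\epsilon)})=O(\alpha\kappa\log\frac{1}{\epsilon})$ calls to controlled-$U$, each of time $T_U$, together with $O(a\alpha\kappa\log\frac{1}{\epsilon})$ further gates, while Theorem~\ref{pro2} adds $O(\frac{r}{\delta_0}\log\frac{r}{\delta_0\epsilon}\log\frac{1}{\epsilon})=O(\kappa\log\frac{\kappa}{\epsilon}\log\frac{1}{\epsilon})$ two-qubit gates. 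Summing reproduces the claimed cost $O\big(\alpha\kappa\log(\frac{1}{\epsilon})(a+T_U)+\kappa\log(\frac{\kappa}{\epsilon})\log(\frac{1}{\epsilon})\big)$.

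The main obstacle is the three-way consistency of the precision budget rather than any single estimate: the error Theorem~\ref{pro2} demands of its input $V$, the error Lemma~\ref{pro1} can deliver for $V$ from a $\delta$-accurate $U$, and the error tolerated in the final $e^{cH}$ must all be made compatible, and one must verify that the specific choice $\delta_0=\frac{1}{\kappa}$ (forced into $(0,r]$ by $\kappa\ge2$) simultaneously produces the normalization $e^2$ and the $\kappa\log^2(\kappa\log\frac{1}{\epsilon})$ factor in the denominator of the $\delta$-bound. A secondary subtlety is that the convergence bound $B$ differs for the two signs of $c$, so the uniform statement with subnormalization $e^2$ needs the small padding argument for $c=-1$; I would also confirm that the multiplicative error amplification $B\epsilon$ of Theorem~\ref{pro2} is what yields the $e^2\epsilon$ error in the conclusion.
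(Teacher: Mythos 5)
Your proposal does not prove the statement at hand. The statement is Theorem~\ref{pro2} itself --- the result, imported from~\cite{CGJ}, that converts a block-encoded controlled $(\mathcal{M},\gamma)$-simulation of $H$ into a block-encoding of a smooth function $f(H)$ with subnormalization $B$ bounding the shifted Taylor series. Your argument never establishes this; instead you invoke Theorem~\ref{pro2} as a black box (``feed $f$ into Theorem~\ref{pro2}'') and combine it with Lemma~\ref{pro1} to derive a block-encoding of $e^{cH}$. What you have written is therefore a proof of Lemma~\ref{lem:expH}, not of Theorem~\ref{pro2}; as a proof of Theorem~\ref{pro2} it is circular, since the very step that ``manufactures a block-encoding of $f(H)$'' is the content of the theorem you were asked to prove. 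A genuine proof would have to open the box: truncate the Taylor expansion of $f$ around $x_0$ at order $O(\frac{r}{\delta}\log(1/\epsilon))$, realize the monomials through the controlled-simulation unitary $W=\sum_{m}|m\rangle\langle m|\otimes e^{im\gamma H}$ supplied by $V$, combine the terms by the linear-combination-of-unitaries technique with coefficients $a_l$ (which is where the subnormalization constraint $\sum_{l}(r+\delta)^l|a_l|\leq B$ enters), and track how the truncation error and the $\frac{\epsilon}{2}$ imprecision of $V$ compound into the $B\epsilon$ error and the stated ancilla count $b+O(\log(\frac{r\log(1/\epsilon)}{\delta}))$ and gate count $O(\frac{r}{\delta}\log(\frac{r}{\delta\epsilon})\log(\frac{1}{\epsilon}))$. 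None of this analysis appears in your proposal.

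For what it is worth, the paper itself also gives no proof of Theorem~\ref{pro2}: it is stated as a tool, attributed to~\cite{CGJ}, and used in Appendix~\ref{app:proofexpH} exactly the way you use it. Indeed your parameter choices ($x_0=1$, $r=1-\frac{1}{\kappa}$, slack $\frac{1}{\kappa}$ so that $r+\delta=1$, $B=e^2$, $b=a+2$, the precision budget for $U$ inherited from Lemma~\ref{pro1}, and the additive cost accounting) reproduce the paper's proof of Lemma~\ref{lem:expH} faithfully, and your explicit treatment of $c=-1$ (where the series bound gives $B=1$ and one pads the subnormalization up to $e^2$ with an extra ancilla) is more careful than the paper's one-line ``can be proved similarly.'' But that is a correct proof of a different statement; it leaves Theorem~\ref{pro2} exactly as unproved as before, resting entirely on the external reference.
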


Now we provide the proof of Lemma~\ref{lem:expH}. We first consider the case of $c=1$. Let $f(x) :=e^x$ and observe that
\begin{align}
f(1+x)\!=\!\!\sum_{n=0}^{\infty}\!\frac{(1+x)^n}{n!}\!=\!\!\sum_{n=0}^{\infty}\frac{\sum_{l=0}^\infty\!\tbinom{n}{l} x^l}{n!}
\!=\!\!\sum_{l=0}^{\infty}\frac{\sum_{n=0}^\infty\!\tbinom{n}{l}}{n!} x^l
\end{align}
for all $x\in[-1,1]$, where $\tbinom{n}{l}=\frac{n(n-1)\cdots (n-l+1)}{l!}$. We choose $x_0 :=1$, $r :=1-\frac{1}{\kappa}$, $\delta :=\frac{1}{\kappa}$, and observe that
\begin{align}
\sum_{l=0}^{\infty}(r+\delta)^l\bigg|\frac{\sum_{n=0}^\infty \tbinom{n}{l}}{n!}\bigg|
\!=\!\!\sum_{l=0}^{\infty}\frac{\sum_{n=0}^\infty \tbinom{n}{l}}{n!}
\!=\!\!\sum_{l=0}^{\infty}\frac{2^n}{n!}
\!=\!e^2 \!:=\!B.
\end{align}

Let $\kappa\geq 2$ and $H$ be a Hermitian matrix such that $\frac{I}{\kappa}\preceq H\preceq I$. If $\epsilon\in(0,\frac{1}{2}]$, then we can implement a unitary $\tilde{U}$ that is a $(e^2,b+O(\log((\kappa-1)\log(\frac{1}{\epsilon}))),e^2\epsilon)$-block-encoding of $e^H$, with a single use of a circuit $V$ which is a $(1,b,\frac{\epsilon}{2})$-block-encoding of controlled $(O((\kappa-1)\log(\frac{1}{\epsilon})),O(\frac{\kappa}{\kappa-1}))$-simulation of
$H$, and with using $O((\kappa-1)\log(\frac{\kappa-1}{\epsilon})\log(\frac{1}{\epsilon}))$ two-qubit gates. Let $b :=a+2$, $\frac{\epsilon}{2} :=\varepsilon$, then the circuit $V$ uses
\begin{align}
&O\bigg(|\alpha\kappa\log(\frac{1}{\epsilon})|\nonumber
\\&+\log((\kappa-1)\log(\frac{1}{\epsilon}))\frac{\log \big(\log((\kappa-1)\log(\frac{1}{\epsilon})\big)/\epsilon)}{\log\log\big(\log((\kappa-1)\log(\frac{1}{\epsilon}))/\epsilon\big)}\bigg) \end{align}
controlled-$U$ or its inverse and with
\begin{align}
&O\bigg(a|\alpha\kappa\log(\frac{1}{\epsilon})| \nonumber
\\&+a\log\big((\kappa-1)\log(\frac{1}{\epsilon})\big)\frac{\log \big(\log((\kappa-1)\log(\frac{1}{\epsilon}))/\epsilon\big)}{\log\log\big(\log((\kappa-1)\log(\frac{1}{\epsilon}))/\epsilon\big)}\bigg)
\end{align}
three-qubit gates, where $U$ is a $(\alpha,a,\sigma)$-block-encoding of $H$ and $\sigma =\frac{\epsilon}{4(\log((\kappa-1)\log(1/\epsilon))+1)^2\kappa\log(1/\epsilon)}$.

For simplicity, we delete some items with a small proportion and let $T_U$ denote the cost of $U$. Then the total cost of $\tilde{U}$ is
\begin{equation}
O\bigg(\alpha\kappa\log(\frac{1}{\epsilon})(a+T_U)+(\kappa-1)\log(\frac{\kappa-1}{\epsilon})\log(\frac{1}{\epsilon})\bigg).
\end{equation}

The case of $c=-1$ can be proved similarly.

Then Lemma~\ref{lem:expH} holds.

\section{Implementation of stage (2.1) when $e^{-S_2}e^{S_1}$ is not Hermitian}\label{app:notH2}

If $e^{-S_2}e^{S_1} :=H$ is not Hermitian, we extend it to an embedding Hermitian matrix $\overline{H}=|0\rangle\langle 1|\otimes H + |1\rangle\langle 0|\otimes H^{\dag}$ by the following lemma.
\begin{lemma}\label{lem:2H}
If $U$ is a $(\alpha,a,\delta)$-block-encoding of $H\in \mathbb{R}^{n\times n}$, then $S_{1,a+1}^\dagger(|0\rangle\langle0|\otimes  U +|1\rangle\langle1|\otimes U^\dagger )
(\sigma_x\otimes I_{a+n})S_{1,a+1}$ is an $(\alpha,a,2\delta)$-block-encoding of $\overline{H}=|0\rangle\langle 1|\otimes H + |1\rangle\langle 0|\otimes H^{\dag}\in \mathbb{R}^{2n\times2n}$, where $\sigma_x$ is the Pauli-$X$ gate, $I_w$ is a $w$-qubit identity operator, and $S_{i,j}$ denotes the swap operation for $i$-th and $j$-th qubit.
\end{lemma}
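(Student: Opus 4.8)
The plan is to verify the three requirements of Definition~1 for the operator $V := S_{1,a+1}^\dagger(|0\rangle\langle0|\otimes U +|1\rangle\langle1|\otimes U^\dagger)(\sigma_x\otimes I_{a+n})S_{1,a+1}$: that it is unitary, that it acts on the right number of qubits, and that its top-left block approximates $\overline{H}$ to within $2\delta$. Unitarity is immediate, since $|0\rangle\langle0|\otimes U +|1\rangle\langle1|\otimes U^\dagger$ is block-diagonal with the unitary blocks $U$ and $U^\dagger$, while $\sigma_x\otimes I$ and the swap $S_{1,a+1}$ are unitary, so $V$ is a product of unitaries. For the dimension count, $U$ is an $(s+a)$-qubit operator (with $n=2^s$), so the factors act on $s+a+1$ qubits, matching the $s+1+a$ qubits needed for an $a$-ancilla block-encoding of the $2n\times2n$ matrix $\overline{H}$.

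The first computational step is to collapse the middle product. Since $|0\rangle\langle0|\sigma_x=|0\rangle\langle1|$ and $|1\rangle\langle1|\sigma_x=|1\rangle\langle0|$, I obtain
\[
(|0\rangle\langle0|\otimes U +|1\rangle\langle1|\otimes U^\dagger)(\sigma_x\otimes I)=|0\rangle\langle1|\otimes U+|1\rangle\langle0|\otimes U^\dagger,
\]
which already has exactly the off-diagonal form of $\overline{H}$, but with the full block-encoding unitary $U$ in place of $H$ and with the ``flag'' qubit occupying position $1$. The sole purpose of conjugating by $S_{1,a+1}$ is bookkeeping: it relabels the flag qubit (position $1$) and the last ancilla qubit of $U$ (position $a+1$) so that, in $V$, the $a$ ancilla qubits of $U$ become the first $a$ qubits---matching the convention $(\langle0|^{\otimes a}\otimes I)V(|0\rangle^{\otimes a}\otimes I)$ of Definition~1---while the flag qubit moves into position $a+1$ to serve as the extra $0/1$ qubit of the embedding $\overline{H}$.

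The core step is to evaluate the projected block. I would push the projector $\Pi_0:=|0\rangle^{\otimes a}\langle0|^{\otimes a}$ (on the first $a$ qubits) through the outer swaps using $S_{1,a+1}\Pi_0 S_{1,a+1}=\Pi_0'$, where $\Pi_0'$ projects positions $2,\dots,a+1$---precisely the ancilla register of $U$---onto $|0\rangle^{\otimes a}$. This reduces the computation to projecting the ancilla of $U$ in $|0\rangle\langle1|\otimes U+|1\rangle\langle0|\otimes U^\dagger$ onto $|0\rangle$. Writing $H_U:=(\langle0|^{\otimes a}\otimes I)U(|0\rangle^{\otimes a}\otimes I)$ for the top-left block of $U$, and using $(\langle0|^{\otimes a}\otimes I)U^\dagger(|0\rangle^{\otimes a}\otimes I)=H_U^\dagger$, the projection yields $|0\rangle\langle1|\otimes H_U+|1\rangle\langle0|\otimes H_U^\dagger$ on the flag-plus-system qubits once the outer swap is undone. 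Hence $\alpha(\langle0|^{\otimes a}\otimes I)V(|0\rangle^{\otimes a}\otimes I)=|0\rangle\langle1|\otimes \alpha H_U+|1\rangle\langle0|\otimes \alpha H_U^\dagger$.

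It remains to bound the error. Subtracting from $\overline{H}=|0\rangle\langle1|\otimes H+|1\rangle\langle0|\otimes H^\dagger$ leaves an off-diagonal operator $|0\rangle\langle1|\otimes(H-\alpha H_U)+|1\rangle\langle0|\otimes(H-\alpha H_U)^\dagger$. Since $U$ is an $(\alpha,a,\delta)$-block-encoding of $H$, we have $\|H-\alpha H_U\|\le\delta$, so each off-diagonal summand has operator norm $\|H-\alpha H_U\|\le\delta$ and the triangle inequality gives total norm at most $2\delta$, establishing the $(\alpha,a,2\delta)$-block-encoding. I expect the only real obstacle to be the careful qubit-index bookkeeping around the conjugation by $S_{1,a+1}$---keeping straight which qubits are ancillas, which is the flag, and how the ordering convention of Definition~1 forces the swap---rather than any analytic difficulty; the norm estimate itself is a one-line triangle-inequality argument, and in fact the true norm of the off-diagonal block equals $\|H-\alpha H_U\|\le\delta$, so $2\delta$ is a safe overestimate.
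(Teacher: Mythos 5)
Your proof is correct and follows essentially the same route as the paper's: collapse $(|0\rangle\langle0|\otimes U+|1\rangle\langle1|\otimes U^\dagger)(\sigma_x\otimes I_{a+n})$ into $|0\rangle\langle1|\otimes U+|1\rangle\langle0|\otimes U^\dagger$, commute the ancilla projector through $S_{1,a+1}$ via $S_{1,a+1}(|0\rangle^{\otimes a}\otimes I_{n+1})=I_1\otimes|0\rangle^{\otimes a}\otimes I_n$, and bound the resulting off-diagonal error operator by the triangle inequality. Your closing observation that this error block in fact has norm exactly $\|H-\alpha H_U\|\le\delta$, so that $2\delta$ is a safe overestimate, is a correct refinement that the paper does not make.
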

\begin{proof}
Since $S_{1,a+1}(|0\rangle^{\otimes a}\otimes I_{n+1})=I_1\otimes|0\rangle^{\otimes a}\otimes I_{n}$,
we have
\begin{widetext}
\begin{align}
&
\bigg\|\overline{H}-\alpha\big(\langle 0|^{\otimes a}\otimes I_{n+1}\big)S_{1,a+1}^\dagger\big(|0\rangle\langle0|\otimes  U +|1\rangle\langle1|\otimes U^\dagger \big)
\big(\sigma_x\otimes I_{a+n}\big)S_{1,a+1}\big(|0\rangle^{\otimes a}\otimes I_{n+1}\big)\bigg\|\nonumber
\\&
=\bigg\|\overline{H}-\alpha\big(\langle 0|^{\otimes a}\otimes I_{n+1}\big)S_{1,a+1}^\dagger\big(|0\rangle\langle1| \otimes U +|1\rangle\langle0|\otimes U^\dagger \big)
S_{1,a+1}\big(|0\rangle^{\otimes a}\otimes I_{n+1}\big)\bigg\|\nonumber
\\&=\bigg\|\overline{H}-\alpha\big(I_1\otimes \langle 0|^{\otimes a}\otimes I_{n}\big)\big(|0\rangle\langle1| \otimes U +|1\rangle\langle0|\otimes U^\dagger \big)
\big(I_1\otimes|0\rangle^{\otimes a}\otimes I_{n}\big)\bigg\|\nonumber
\\&=\bigg\||0\rangle\langle1|\otimes\bigg(H-\alpha\big(\langle0|^{\otimes a}\otimes I_n\big)U\big(|0\rangle^{\otimes a}\otimes I_n\big)\bigg)
+|1\rangle\langle0|\otimes\bigg(H^\dag-\alpha\big(\langle0|^{\otimes a}\otimes I_n\big)U^\dag\big(|0\rangle^{\otimes a}\otimes I_n\big)\bigg)\bigg\|\nonumber
\\&\leq 2\delta.
\end{align}
\end{widetext}
Then Lemma~\ref{lem:2H} holds.
\end{proof}

$H$ has the singular value decomposition
\begin{align}
H=\sum_{i=0}^{M-1}\lambda_i|{\bf u}_i\rangle\langle {\bf v}_i|,
\end{align}
where $\lambda_i$, $|{\bf u}_i\rangle$ and $|{\bf v}_i\rangle$ are respectively the singular value, the left singular vector, and the right singular vector of $H$. Hence, $\overline{H}$ has $2M$ eigenvalues $\{\pm \lambda_i\}_{i=0}^{M-1}$ and
eigenvectors $\{|{\bf w}_i^{\pm}\rangle\}_{i=0}^{M-1}$, where
\begin{align}
|{\bf w}_i^{\pm}\rangle=\frac{|0\rangle|{\bf u}_i\rangle\pm |1\rangle|{\bf v}_i\rangle}{\sqrt{2}}.
\end{align}
Given the block-encoding of $H$, we can implement the block-encoding of $e^{\imath\overline{H}t}$ by Theorem~\ref{the:OBC} and Lemma~\ref{lem:2H}. By using it, we apply quantum phase estimation on the first two registers of the state in
\begin{align}
|0\rangle^{\otimes q_1}\frac{1}{\sqrt{2M}}\sum_{i=0}^{2M-1}|i\rangle|i\rangle
\end{align}
to obtain the state
 \begin{align}
\frac{1}{\sqrt{2M}}\bigg(\sum_{i=0}^{M-1}|\lambda_i\rangle|{\bf w}_i^+\rangle|{\bf w}_i^+\rangle+\sum_{i=0}^{M-1}|-\lambda_i\rangle|{\bf w}_i^-\rangle|{\bf w}_i^-\rangle\bigg).
\end{align}
Then, for the target state that has the positive value in the eigenvalue register and
has $|1\rangle$ in the first register of $|{\bf w}_i^+\rangle$, we apply quantum amplitude amplification to amplify its amplitude. We can get
 \begin{align}
\frac{1}{\sqrt{M}}\sum_{i=0}^{M-1}|\lambda_i\rangle\bigg(|1\rangle|{\bf v}_i\rangle\bigg)\bigg(|1\rangle|{\bf v}_i\rangle\bigg)
\end{align}
by $O(1)$ times Grover operator iterations. The state $|\Psi\rangle$ is obtained by discarding the second and fourth registers.

\section{Hadamard test fails to compute $\langle {\bf x}_i|{\bf v}_j\rangle$ directly}\label{app:Hadamardtest}

As shown in FIG.~\ref{figure4}, if we want to compute $\langle {\bf x}_i|{\bf v}_j\rangle$, we need two unitaries $U_i: |0\rangle\mapsto|{\bf x}_i\rangle$ and $V_j: |0\rangle\mapsto|{\bf v}_j\rangle$. The $U_i$ is naturally achieved with $\textbf{O}_2$. However, for $V_j$, as discussed in step 2, we are only provided with the unitary $|0\rangle\mapsto|{\bf v}_j\rangle|{\bf v}_j\rangle$ and the additional $|{\bf v}_j\rangle$ cannot be avoided, which will inevitably impact the inner product value. For example, as illustrated in FIG.~\ref{figure5}, with $U_i$ and $\widetilde{V}_j:|0\rangle\mapsto|{\bf v}_j\rangle|{\bf v}_j\rangle$, we can compute the value of $\langle {\bf x}_i|{\bf v}_j\rangle\langle 0|{\bf v}_j\rangle$, but we don't know what the value of $\langle 0|{\bf v}_j\rangle$ is. This is what causes the Hadamard test fail. Moreover, if the measurement in Hadamard test is replaced with parallel amplitude estimation~\cite{Brassard2002,yu2016}, then Lemma~\ref{lem:inner} is derived. Hence, the same reason will lead to the failure of Lemma~\ref{lem:inner} when we utilize it to estimate the inner product $\langle {\bf x}_i|{\bf v}_j\rangle$ directly. Therefore, we only get the value of $\langle {\bf x}_i|{\bf v}_j\rangle$ by computing its square rather than computing it directly.

\begin{figure}[htbp!]
	
	\begin{minipage}{1\linewidth}
		\vspace{3pt}
		\centerline{\includegraphics[width=0.75\textwidth]{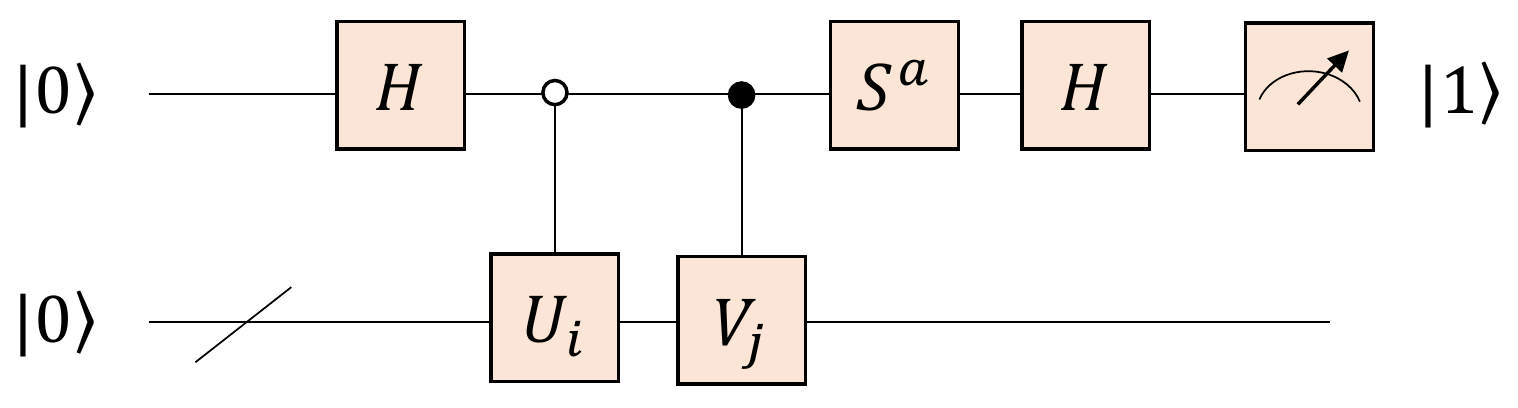}}
	\end{minipage}
	
	\caption{Hadamard test circuit for measuring the real and imaginary part of $\langle \Psi_i|\Phi_j\rangle$ for any arbitrary unitaries $U_i: |0\rangle\mapsto|\Psi_i\rangle$ and $V_j: |0\rangle\mapsto|\Phi_j\rangle$~\cite{LNN2018}. Here $S$ is the phase gate and $a\in\{0,1\}$. The success probability of measuring $|1\rangle$ is given by $P_{ij}=\frac{1-\text{Re}(\zeta\langle \Psi_i|\Phi_j\rangle)}{2}$, where $\zeta=1$ if $a=0$ and $\zeta=\imath$ if $a=1$. Then, $\zeta=1$ and $\zeta=\imath$ recover, respectively, the real and imaginary parts of $\langle \Psi_i|\Phi_j\rangle$.}
	\label{figure4}
\end{figure}
\begin{figure}[htbp!]
	
	\begin{minipage}{1\linewidth}
		\vspace{3pt}
		\centerline{\includegraphics[width=0.65\textwidth]{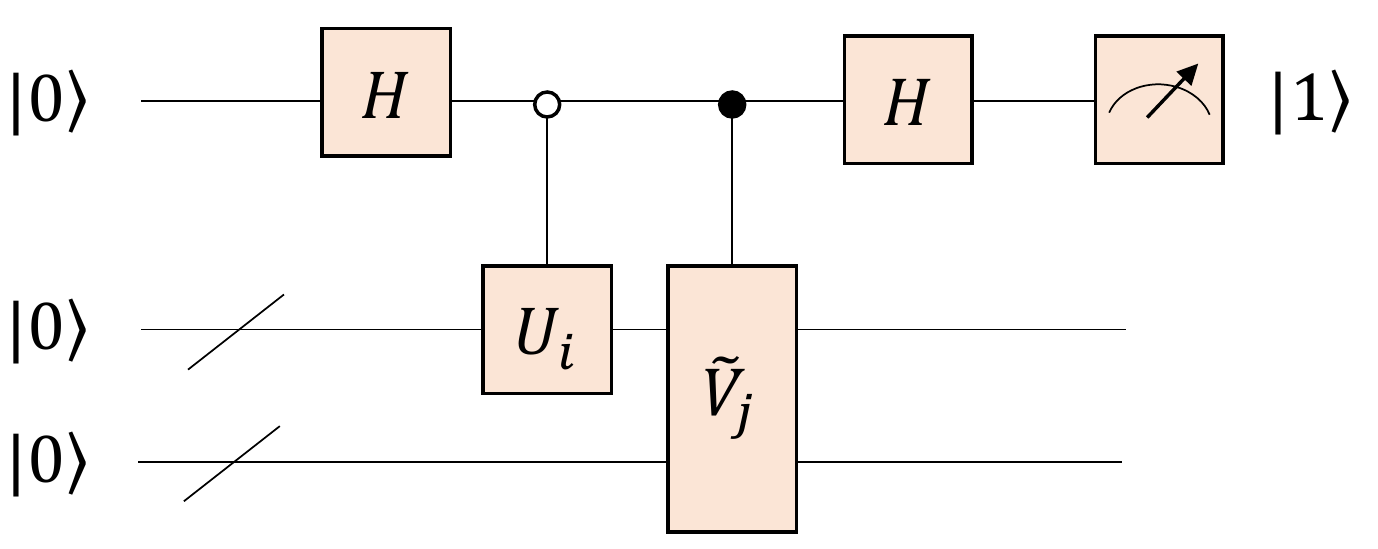}}
	\end{minipage}
	
	\caption{A simple example.}
	\label{figure5}
\end{figure}
	
\begin{lemma}(Distance/Inner products estimation~\cite{qmeans2019}).\label{lem:inner}
Assume that the following unitaries $|i\rangle|0\rangle\mapsto|i\rangle|{\bf x}_i\rangle$ and $|j\rangle|0\rangle\mapsto|j\rangle|{\bf x}_j\rangle$ can be performed in time $T$ and the norms of the vectors are known. For any $\Delta>0$ and $\varepsilon>0$, there exists a quantum algorithm can compute
\begin{align}
|i\rangle|j\rangle|0\rangle\mapsto|i\rangle|j\rangle|{\bf x}_i^T{\bf x}_j\rangle
\end{align}
or
\begin{align}
|i\rangle|j\rangle|0\rangle\mapsto|i\rangle|j\rangle|\|{\bf x}_i-{\bf x}_j\|^2\rangle
\end{align}
with probability at least $1-2\Delta$ for any $\varepsilon$ with complexity $\widetilde{O}(\frac{\|{\bf x}_i\|\|{\bf x}_j\|T\log(1/\Delta)}{\varepsilon})$, where $\varepsilon$ is the error of ${\bf x}_i^T{\bf x}_j$ or $\|{\bf x}_i-{\bf x}_j\|^2$.
\end{lemma}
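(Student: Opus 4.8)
The plan is to reduce both requested outputs to the estimation of the overlap $\langle {\bf x}_i|{\bf x}_j\rangle$ of the two \emph{normalized} states and then rescale by the known norms. Indeed ${\bf x}_i^T{\bf x}_j=\|{\bf x}_i\|\|{\bf x}_j\|\langle {\bf x}_i|{\bf x}_j\rangle$ and $\|{\bf x}_i-{\bf x}_j\|^2=\|{\bf x}_i\|^2+\|{\bf x}_j\|^2-2\|{\bf x}_i\|\|{\bf x}_j\|\langle {\bf x}_i|{\bf x}_j\rangle$, so once the overlap sits in a register the two outputs differ only by reversible arithmetic on the known norms. Hence it suffices to compute $\langle {\bf x}_i|{\bf x}_j\rangle$ coherently in superposition over all pairs $(i,j)$.

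First I would adjoin a control qubit and, conditioned on $|i\rangle|j\rangle$, use the two given state-preparation unitaries to build $\frac{1}{\sqrt{2}}\big(|0\rangle|{\bf x}_i\rangle+|1\rangle|{\bf x}_j\rangle\big)$. Applying a Hadamard to the control qubit produces $\frac{1}{2}|0\rangle(|{\bf x}_i\rangle+|{\bf x}_j\rangle)+\frac{1}{2}|1\rangle(|{\bf x}_i\rangle-|{\bf x}_j\rangle)$, so the probability that the control register is found in $|1\rangle$ equals $p_{ij}=\frac{1}{2}\big(1-\mathrm{Re}\langle {\bf x}_i|{\bf x}_j\rangle\big)$. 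Because the data are real this is an affine function of the overlap, $\langle {\bf x}_i|{\bf x}_j\rangle=1-2p_{ij}$, which keeps the subsequent error propagation linear; this is exactly why I prefer the Hadamard-test encoding over a swap-test encoding in which $p$ is quadratic in the overlap.

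Second, rather than measuring I would run amplitude estimation coherently on the reflection about the $|1\rangle$-subspace, writing a digital estimate of $p_{ij}$ into a fresh register while leaving the $(i,j)$-superposition intact. To guarantee additive error $\varepsilon$ in ${\bf x}_i^T{\bf x}_j$ one needs the overlap, hence $p_{ij}$, to additive error $\Theta(\varepsilon/(\|{\bf x}_i\|\|{\bf x}_j\|))$; amplitude estimation reaches this with $O(\|{\bf x}_i\|\|{\bf x}_j\|/\varepsilon)$ rounds, each invoking the preparation unitaries a constant number of times, at cost $O(\|{\bf x}_i\|\|{\bf x}_j\|T/\varepsilon)$. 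I would then invert the affine map and multiply by $\|{\bf x}_i\|\|{\bf x}_j\|$ (or combine with $\|{\bf x}_i\|^2+\|{\bf x}_j\|^2$ for the distance version) via reversible arithmetic to deposit the output, uncomputing all intermediate registers; the arithmetic is exact up to polylogarithmically many ancillas and contributes only to the hidden logarithmic factors. Finally, to lift the native amplitude-estimation success probability to $1-2\Delta$ uniformly in $(i,j)$ I would take the median of $O(\log(1/\Delta))$ independent copies, which yields the stated complexity $\widetilde{O}\big(\|{\bf x}_i\|\|{\bf x}_j\|T\log(1/\Delta)/\varepsilon\big)$.

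The hard part is making amplitude estimation genuinely coherent --- producing a digital estimate of $p_{ij}$ in superposition over the pairs rather than collapsing it by measurement --- and then controlling how its discretization error, together with the $\varepsilon_x$-error of the input preparation unitaries, propagates through the affine inversion and the multiplication by the norms so that the final additive error is honestly $\varepsilon$. Keeping the encoding affine in the overlap and bounding the norm-rescaling factor are the two ingredients that make this error bookkeeping go through cleanly.
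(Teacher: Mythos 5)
Your proposal is correct and follows essentially the same route the paper (and its source, the $q$-means reference) takes: encode the overlap of the normalized states into a measurement probability via a Hadamard-test circuit, replace the measurement with coherent amplitude estimation so the digital estimate is written in superposition over $(i,j)$, boost the success probability to $1-2\Delta$ by a median over $O(\log(1/\Delta))$ repetitions, and recover ${\bf x}_i^T{\bf x}_j$ or $\|{\bf x}_i-{\bf x}_j\|^2$ by reversible arithmetic with the known norms. The paper itself only sketches this in one sentence (``replace the measurement in the Hadamard test with parallel amplitude estimation'') and defers to the cited work, so your write-up is a faithful, more detailed account of the same argument, including the correct error rescaling $\varepsilon'=\Theta(\varepsilon/(\|{\bf x}_i\|\|{\bf x}_j\|))$ that produces the stated complexity.
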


\section{ Construct the compressed analog-encoded state in the QMEDR framework}\label{app:analog}

\begin{figure*}[htbp!]
	
	\begin{minipage}{1\linewidth}
		\vspace{3pt}
		\centerline{\includegraphics[width=0.8\textwidth]{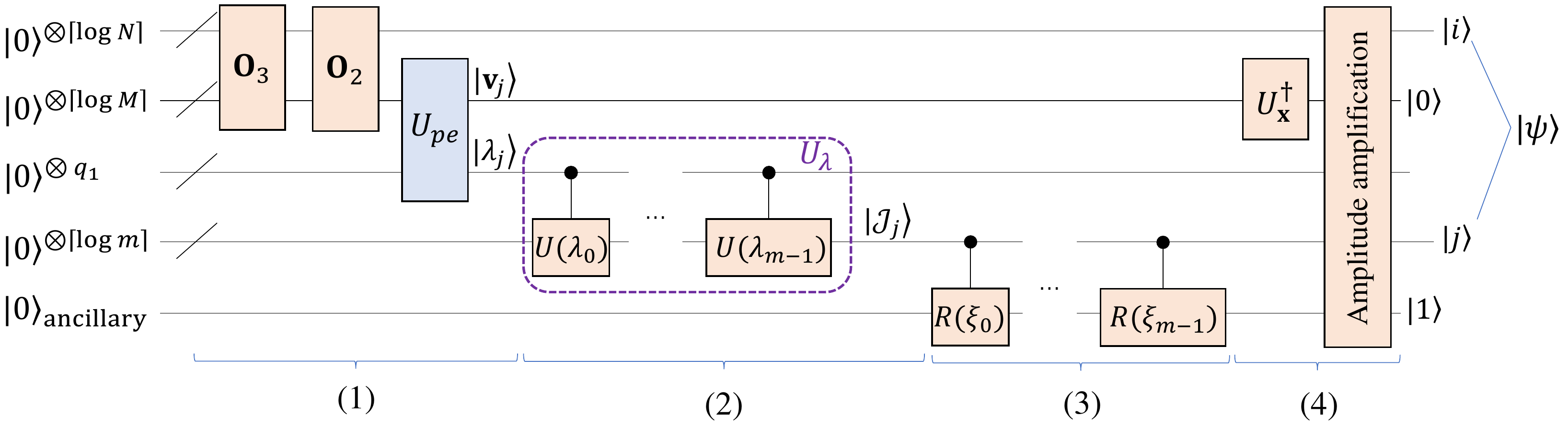}}
	\end{minipage}
	
	\caption{Quantum circuit for preparing the compressed analog-encoded state. $U(\lambda_j)$ and $R(\xi_j)$ correspond with $CU(\lambda_j)$ and $CR(\xi_j)$ respectively, $j=0,1,\cdots,m-1$. $\mathcal{J}_j=j$ for $j=0,1,\cdots, m-1$ and $\mathcal{J}_j=0$ for the others. }
	\label{figure6}
\end{figure*}

This appendix aims to obtain the compressed analog-encoded state
\begin{align}
\frac{1}{\|Y\|_F}\sum_{i=0}^{N-1}\sum_{j=0}^{m-1}y_{ij}|i\rangle|j\rangle :=|\psi\rangle.
\end{align}
Since
\begin{align}
{\bf x}_i=\bigg(\sum_{j=0}^{M-1}|{\bf v}_j\rangle\langle{\bf v}_j|\bigg){\bf x}_i=\sum_{j=0}^{M-1}y_{ij}|{\bf v}_j\rangle,
\end{align}
we have
\begin{align}
\frac{\sum_{i=0}^{N-1}|i\rangle\sum_{j=0}^{M-1}x_{ij}|j\rangle}{\|X\|_F}=\frac{\sum_{i=0}^{N-1}|i\rangle\sum_{j=0}^{M-1}y_{ij}|{\bf v}_j\rangle}{\|X\|_F}.
\end{align}
It implies that we can perform the mapping: $|{\bf v}_j\rangle\mapsto|j\rangle$ on the above state and truncate it to keep the first $m$ terms to get $|\psi\rangle$. But it seems to be difficult. In~\cite{Yu2018QPCA}, the authors introduced an ``anchor state" to achieve the above mapping indirectly. Based on their idea, $|\psi\rangle$ is obtained by the following steps.

(1) We first prepare the initial state
\begin{align}\label{equ:yij}
\frac{\sum_{i=0}^{N-1}\sum_{j=0}^{M-1}x_{ij}|i\rangle|j\rangle|0\rangle^{\otimes q_1}|0\rangle^{\otimes \lceil\log m\rceil} }{\|X\|_F},
\end{align}
which is obtained by using ${\bf O}_2$ and ${\bf O}_3$.
With $e^{\imath(e^{-S_2}e^{S_1})t}$, we perform quantum phase estimation on the second and third registers to get the state
\begin{align}\label{equ:31}
\frac{\sum_{i=0}^{N-1}\sum_{j=0}^{M-1}y_{ij}|i\rangle|{\bf v}_j\rangle|\lambda_j\rangle|0\rangle}{\|X\|_F}.
\end{align}

For simplicity, here we assume that the matrix $e^{-S_2}e^{S_1}$ is Hermitian. If not, we extend it to an embedding Hermitian matrix $\overline{H}$ as in Appendix~\ref{app:notH2} and introduce an ancillary qubit with an initial state of $|1\rangle$. Since $|1\rangle|{\bf v}_j\rangle=\frac{|{\bf w}_j^{+}\rangle-|{\bf w}_j^{-}\rangle}{\sqrt{2}}$, we obtain the new state
\begin{align}
\frac{1}{\|X\|_F}\sum_{i=0}^{N-1}\sum_{j=0}^{M-1}y_{ij}|i\rangle\frac{|{\bf w}_j^{+}\rangle-|{\bf w}_j^{-}\rangle}{\sqrt{2}}|0\rangle|0\rangle.
\end{align}
With $e^{\imath \overline{H}t}$, we perform quantum phase estimation on the second and third registers to get the state
\begin{align}
\frac{1}{\|X\|_F}\sum_{i=0}^{N-1}\sum_{j=0}^{M-1}y_{ij}|i\rangle\frac{|{\bf w}_j^{+}\rangle|\lambda_j\rangle-|{\bf w}_j^{-}\rangle|-\lambda_j\rangle}{\sqrt{2}}|0\rangle.
\end{align}
Next, for the target state that has $|1\rangle$ in the first register of $|{\bf w}_j^+\rangle$ and has the positive value in the third register, we apply quantum amplitude amplification to amplify its amplitude. We can get
\begin{align}
\frac{1}{\|X\|_F}\sum_{i=0}^{N-1}\sum_{j=0}^{M-1}y_{ij}|i\rangle\bigg(|1\rangle|{\bf v}_j\rangle\bigg)|\lambda_j\rangle|0\rangle
\end{align}
by $O(1)$ times Grover operator iterations. The state in Eq.(\ref{equ:31}) is obtained by discarding the ancillary qubit $|1\rangle$.

(2) We perform $U_\lambda$ on the last two registers to get
\begin{align}\label{equ:ind}
&\frac{1}{\|X\|_F}\!\bigg(\!\sum_{i=0}^{N-1}\!\sum_{j=0}^{m-1}\!y_{ij}|i\rangle|{\bf v}_j\!\rangle|\lambda_j\!\rangle|j\rangle
\!+\!\!\sum_{i=0}^{N-1}\!\sum_{j=m}^{M-1}\!y_{ij}|i\rangle|{\bf v}_j\!\rangle|\lambda_j\!\rangle|0\rangle\!\bigg),
\end{align}
where $U_\lambda$ consists of a sequence of controlled unitary operations $CU(\lambda_j):|\lambda_j\rangle|0\rangle\mapsto|\lambda_j\rangle|j\rangle$~\cite{Yu2018QPCA}, $j=0,1,\cdots,m-1$.

(3) We randomly pick out a sample ${\bf x}$ from the dataset $\{{\bf x}_i\}_{i=0}^{N-1}$, whose quantum state $|{\bf x}\rangle$ is prepared by $U_{\bf x}$. Note that $U_{\bf x}$ can be readily implemented via ${\bf O}_2$.
$|{\bf x}\rangle$ will with high probability have a large support in the subspace spanned by the basis $\{|{\bf v}_j\rangle\}_{j=0}^{m-1}$~\cite{Yu2018QPCA}. Then
\begin{align}
|{\bf x}\rangle\approx \xi_0|{\bf v}_0\rangle+\xi_1|{\bf v}_1\rangle+\cdots+\xi_{m-1}|{\bf v}_{m-1}\rangle,
\end{align}
where $\xi_j=\langle {\bf v}_j|{\bf x}\rangle$, $j=0,1,\cdots,m-1$.
Next, we estimate the value of each $\xi_j=\langle {\bf v}_j|{\bf x}\rangle$, which can be done by Hadamard test. With these values, we append an ancillary qubit, and perform $m$ controlled rotations $CR(\xi_j)$, conditioned on $|j\rangle$, to get
\begin{align}
&\frac{1}{\|X\|_F}\bigg[\sum_{i=0}^{N-1}\sum_{j=0}^{m-1}y_{ij}|i\rangle|{\bf v}_j\rangle|\lambda_j\rangle|j\rangle\bigg(\frac{C}{\widehat{\xi}_j}|1\rangle+\sqrt{1-(\frac{C}{\widehat{\xi}_j})^2}|0\rangle\bigg)\nonumber
\\&+\sum_{i=0}^{N-1}\sum_{j=m}^{M-1}y_{ij}|i\rangle|{\bf v}_j\rangle|\lambda_j\rangle|0\rangle|0\rangle\bigg],
\end{align}
where \begin{align}
CR(\xi_j):|j\rangle|0\rangle\mapsto|j\rangle\bigg(\frac{C}{\widehat{\xi}_j}|1\rangle+\sqrt{1-(\frac{C}{\widehat{\xi}_j})^2}|0\rangle\bigg)
\end{align}
and $\widehat{\xi}_j$ is the estimate of $\xi_j$, $C=\min_j\widehat{\xi}_j=O(\frac{1}{\sqrt{m}})$.

(4) We perform $U_{\bf x}^\dag$ on the second register. Next, for the target state having $|0\rangle$ in the second register and having $|1\rangle$ in the last register, we apply quantum amplitude amplification~\cite{Brassard2002} to amplify its amplitude. Then we get $|\psi\rangle$ by discarding the redundant registers.

The entire quantum circuit of constructing $|\psi\rangle$ is shown in FIG.~\ref{figure6}.

We can transform Eq.(\ref{equ:obj}) into a maximum optimization problem, then as analyzed in~\cite{Yu2018QPCA}, the complexity of preparing the compressed analog-encoded state is $O(\text{poly}\log(N,M))$.

\bibliography{refe}
\end{document}